\documentclass[11pt]{article}

\usepackage{amsmath,amssymb,amsthm,comment,graphicx,lineno,color}
\usepackage{algorithm}
\usepackage[numbers]{natbib}
\usepackage{adjustbox}
\usepackage{caption}    
\usepackage{setspace}
\usepackage{subcaption}
\usepackage{placeins}
\usepackage{float}
\usepackage[utf8]{inputenc}
\usepackage{multirow}
\usepackage{mathrsfs}

\theoremstyle{definition}
\newtheorem{defn}{Definition}

\theoremstyle{plain}
\newtheorem{theorem}{Theorem}

\newtheorem{lemma}[theorem]{Lemma}

\providecommand{\keywords}[1]{\textbf{\textrm{Keywords:}} #1}                          
\setcitestyle{square} 
 \usepackage[affil-it]{authblk}

\newcommand{\FDP}{\textrm{FDP}}
\newcommand{\FDR}{\textrm{FDR}}
\newcommand{\FNR}{\textrm{FNR}}
\newcommand{\Lfdr}{\textrm{Lfdr}}
\newcommand{\mFDR}{\textrm{mFDR}}
\newcommand{\mFNR}{\textrm{mFNR}}
\newcommand{\EE}{\mathbb{E}}
\newcommand{\PP}{\mathbb{P}}
\newcommand{\bigO}{\mathcal{O}}

\newcommand\smallO{
  \mathchoice
    {{\scriptstyle\mathcal{O}}}
    {{\scriptstyle\mathcal{O}}}
    {{\scriptscriptstyle\mathcal{O}}}
    {\scalebox{.7}{$\scriptscriptstyle\mathcal{O}$}}
  }


\title{A Generalized Benjamini-Hochberg Procedure for Multivariate Hypothesis Testing}

\author[*]{Kasra Alishahi}
\author[*]{Ahmad Reza Ehyaei}
\author[**]{Ali Shojaie}
\affil[*]{Department of Mathematical Sciences, Sharif University of Technology, Tehran, Iran}
\affil[**]{Department of Biostatistics, University of Washington, Seattle, WA}
\date{\today}

\begin{document}
\maketitle

\begin{abstract}
The introduction of the false discovery rate (FDR) by Benjamini and Hochberg has spurred a great interest in developing methodologies to control the FDR in various settings. The majority of existing approaches, however, address the FDR control for the case where an appropriate univariate test statistic is available. Modern hypothesis testing and data integration applications, on the other hand, routinely involve multivariate test statistics. The goal, in such settings, is to combine the evidence for each hypothesis and achieve greater power, while controlling the number of false discoveries. This paper considers data-adaptive methods for constructing nested rejection regions based on multivariate test statistics ($z$-values). It is proved that the FDR can be controlled for appropriately constructed rejection regions, even when the regions depend on data and are hence random. This flexibility is then exploited to develop optimal multiple comparison procedures in higher dimensions, where the distribution of non-null $z$-values is unknown. Results are illustrated using simulated and real data. 
\end{abstract}

\keywords{False discovery rate, Multivariate $z$-value, Random nested rejection regions, High-dimensional statistics}

\section{Introduction}\label{sec:intro}

Multiple hypothesis testing is a fundamental problem in many new scientific applications involving Big Data, from genomics and neuroscience to astronomy and finance. 
The false discovery rate (FDR) control, introduced in Benjamini and Hochberg's seminal paper \citep{BH95}, is one of the most important methodological developments in multiple hypothesis testing. 
To control the FDR at a predetermined level $\alpha$, \citeauthor{BH95} (BH) proposed a step-down procedure based on ranked $p$-values. The procedure was initially developed under the assumption that the $n_0$ null hypotheses are independent of each other and are also independent of the $n_1$ non-null hypotheses; nonetheless, it continues to control the FDR if the $n=n_0+n_1$ hypotheses are positively dependent; see, e.g., \citet{BY01} for additional details. 

A potential drawback of the BH procedure is that it controls the FDR at the level $(n_{0}/n)\alpha$ and is hence conservative. 
A number of authors, including, \citet{BH00}, \citet{STS04}, \citet{BKY06} and \citet{GBS09}, have thus proposed modifications of the BH procedure for more efficient multiple testing by estimating $n_0$. 
These `BH-type' procedures primarily focus on better control of the FDR, but their efficiency has not been formally investigated. 
In particular, while they guarantee the control of $\FDR=\EE(V/R)$, for $V$ and $R$ defined in Table~\ref{tab:classification}, they do not provide any guarantees on the false negative rate $\FNR=\EE\left(T/(n-R)\right)$. 
\begin{table}[b]
\centering
\caption{Possible outcomes in multiple hypothesis testing problems.}\label{tab:classification}
\begin{adjustbox}{max width=\textwidth}
    \begin{tabular}{| l | c | c | c |}
\hline 
\, & \multicolumn{2}{c|}{Decisions} & \, \\ 
\cline{2-3} 
Truth & Not rejected & Rejected & Total \\ 
\hline 
Null & $U$ & $V$ & $n_0$ \\ 
\hline 
Non-Null & $T$ & $S$ & $n_1$ \\ 
\hline 
Total & $n-R$ & $R$ & $n$ \\ 
\hline 
    \end{tabular}
      \centering
 \end{adjustbox}
\end{table}

As an alternative to controlling the FDR based on $p$-values, a number of authors have advocated the use of test statistics, or $z$-values. 
\citet{Eb07} introduced the control of \emph{local} false discovery rate ($\Lfdr$), which facilitates the calculation of size and power in large-scale testing problems. 
Consider a two component mixture model $f(z)=\pi_0f_0(z)+\pi_1f_1(z)$, in which $f_0$ and $f_1$ are densities of $z$-values under the null and non-null hypotheses; for each $1 \leq i \leq n $, $H_{0,i}$ is then true with probability $\pi_0$ and false with probability $\pi_1 = 1-\pi_0$. 
For $z$-values $Z_1, Z_2, \ldots, Z_n$ corresponding to the $H_{0,i}$, \citeauthor{Eb07} defined the $\Lfdr$ as 
\begin{equation*}
\Lfdr(z)= \mathbb{P}\left( H_{0,i} \text{ is true } \mid Z_i=z \right)=\dfrac{\pi_0f_0(z)}{f(z)}, 
\end{equation*}
and proposed to reject $H_{i,0}$ whenever $\Lfdr(Z_i)$ does not exceed a threshold  $\lambda >0$. 

As yet another alternative to the FDR, \citet{SC07} and \citet{Xj11} developed adaptive multiple testing procedures based on $z$-values and showed that their procedure is \emph{optimal} in the sense that it minimizes the empirical false negative rate, $\mFNR = \EE(T)/\EE(n-R)$, while controlling the empirical false discovery rate, $\mFDR = \EE(V)/\EE(R)$; see Table~\ref{tab:classification}. 
Let $F_0(z)$, $F_1(z)$ and $F(z)$ be cumulative density functions (cdf's) corresponding to $f_0(z)$, $f_1(z)$ and $f(z)$, respectively. Alternatively, $\mFDR$ can then be defined as the posterior probability of a case being null given that its $z$-value $Z_i$ is less than some cutoff $z$, 
\begin{equation*}
\mFDR(z)= \mathbb{P}(H_{0,i} \text{ is true } \mid Z_i \leq z )=\dfrac{\pi_0F_0(z)}{F(z)}. 
\end{equation*}
It can thus be seen that $\mFDR$ and $\Lfdr$ are analytically related: 
\begin{equation*}
\mFDR(z)= \int_{-\infty}^{z}  \Lfdr(u)f(u) du \Bigg/ \int_{-\infty}^z f(u)du =  \EE\left[\Lfdr(Z) \mid Z \leq z\right]. 
\end{equation*}

\citet{GW04} showed that for independent hypotheses, $\mFDR$ ($\mFNR$) and $\FDR$ ($\FNR$) are asymptotically equivalent, in the sense that $\mFDR = \FDR +\bigO(n^{-1/2})$. 
Thus, the procedure of \citeauthor{SC07} asymptotically controls the FDR. 
However, procedures based on $z$-values may not provide exact  FDR control at a given level $\alpha$. 

Despite significant progress in multiple hypothesis testing, existing approaches are mainly suitable for univariate hypotheses, i.e., for testing multiple univariate $p$- or $z$-values. 
Increasingly, however, multivariate, and potentially high-dimensional,  evidence is available for each hypothesis. 
A prime example of such applications arises in \emph{data integration} for high-throughput biology: As new high-throughput technologies emerge, biomedical scientists now collect multiple types of data on various aspects of cellular function, including DNA variants, copy number variation, DNA methylation, mRNA expression and abundances of proteins and metabolites. 
The Cancer Genome Atlas (TCGA) \citep{TCGA}, the ENCODE project \citep{ENCODE} and the Genotype-Tissue Expression (GTEx) project \citep{GTEx} are just a few examples of massive efforts to collect diverse high-throughput data in order to accelerate scientific discoveries. 
The hope, in these and other projects, is to delineate cellular functions and mechanisms of disease initiation and progression by integrating the evidence from diverse high-throughput data. 

A key step in ensuring the reproducibility of findings from testing many multivariate hypotheses is controlling the number of false discoveries. 
However, when $d > 1$, there is no unique way to rank multivariate hypotheses tests based on $p$-values, which is necessary for applying BH-type procedures. It is therefore not clear how to use BH-type procedures in order to control the FDR in multivariate settings. 
On the other hand, methods based on mixture model and density estimation can, in principal, be applied when $d > 1$. 
An early attempt to control the FDR in the multivariate setting was the proposal of \citet{Pa06}, who tried to generalize \citeauthor{Eb07}'s $\Lfdr$ as a function of multivariate $z$-values. They proposed to estimate $f_0$, $f_1$ and $\pi_0$ and then construct a local FDR function for rejecting $d=2$ dimensional hypotheses.  
However, \citeauthor{Eb07}'s method assumes that $\Lfdr(z)$ decreases as $|z|$ becomes larger. Under this assumption, $\FDR(z)$ is smaller than $\Lfdr(z)$, so controlling the $\Lfdr$ also controls the FDR. 
For instance, if $f_{(i)}$ is the standard normal density, then the family of densities $\{f_{(i)}(z-\mu):\mu \in \mathbb{R}\}$ has the monotone decreasing local FDR property (see, e.g., \citep{SC07} for more details). 
Therefore, in this case, controlling the $\Lfdr(z)$ at any level  guarantees that the FDR does not exceed that threshold. 
However, in many cases, for instance when $f_{(i)}$ have heavy tails or multiple modes, $\Lfdr(z)$ is no longer monotone decreasing. 
In such settings, the methods based on $\Lfdr$, e.g. \citep{Eb07} and \citep{Pa06}, are not guaranteed to control the FDR. 

\begin{table}[t]
\centering
\caption{Estimated FDR and FNR for method of \citet{SC07} (SC) and the oracle method in high dimensions. As the dimension $d$ increases, SC fails to control the FDR at the pre-specified level of $10\%$.}\label{tab:Cai}
\begin{adjustbox}{max width=\textwidth}
    \begin{tabular}{rrrrr}
  \hline
$d$ & SC FDR & Oracle FDR & SC FNR & Oracle FNR \\ 
  \hline
  2 & 0.11 & 0.10 & 0.13 & 0.13 \\ 
    3 & 0.14 & 0.10 & 0.12 & 0.12 \\ 
    4 & 0.20 & 0.10 & 0.09 & 0.06 \\ 
    5 & 0.40 & 0.10 & 0.09 & 0.03 \\ 
    6 & 0.52 & 0.10 & 0.07 & 0.02 \\ 
    7 & 0.66 & 0.10 & 0.08 & 0.02 \\ 
    8 & 0.73 & 0.10 & 0.08 & 0.07 \\ 
    9 & 0.76 & 0.10 & 0.04 & 0.00 \\ 
   10 & 0.79 & 0.10 & 0.04 & 0.00 \\ 
   11 & 0.80 & 0.10 & 0.01 & 0.01 \\ 
   12 & 0.80 & 0.10 & 0.01 & 0.00 \\ 
   13 & 0.80 & 0.10 & 0.00 & 0.00 \\ 
   14 & 0.80 & 0.10 & 0.01 & 0.00 \\ 
   15 & 0.80 & 0.10 & 0.05 & 0.00 \\ 
   \hline
    \end{tabular}
      \centering
 \end{adjustbox}
\end{table}

Another option for multivariate FDR control is the proposal of \citeauthor{SC07} (SC), which requires estimation of the multivariate densities, $f_0$ and $f_1$, as well as the proportion of null hypotheses $\pi_0$. 
Unfortunately, density estimation becomes increasingly difficult as the dimension $d$ increases. 
For instance, multivariate kernel density estimation is challenging when $d > 5$, the best possible (minimax) rate of mean-squared error of kernel density estimation is $\bigO(n^{-4/(4+d)})$ \citep{Pr08}. 
This bound underscores the ``curse of dimensionality'' when $d$ is large. 
As a result, the asymptotic FDR control of \citeauthor{SC07}'s method becomes invalid when $d$ is large. 
Table~\ref{tab:Cai} illustrates the inability of the SC method to control the FDR in multivariate settings in comparison to an oracle procedure, which assumes that $f_1$ and $\pi_0$ are known. 
Here, $n=10\,000$ z-values are generated from a mixture of Gaussians with $\pi_0=0.8$; the null hypotheses are independent standard normals, and the alternative hypotheses have a Gaussian density with a random covariance matrix and mean vector of length one. 
Since estimating $\pi_0$ in high dimensions is not straightforward (for instance the method of \citet{JC07} is not applicable), in this simulation, we have used the true value of $\pi_0 = 0.8$. The results clearly show that even for moderate dimensions, e.g. when $d = 3$ or 4, SC fails to control the FDR at the nominal level of $\alpha=0.10$.  

An alternative to FDR control for multivariate hypotheses is to combine the evidence from the $d$-variate hypotheses into a single summary measure. For instance, one can apply the Fisher's transformation for combining the $d$ $p$-values corresponding to each test of hypothesis to obtain a single $p$-value. BH-type methods can then be used to control the FDR using the resulting univariate $p$-value. However, the validity of such summaries often relies on strong assumptions. Moreover, as we will show in Section~\ref{sec:sim} such approaches can also be inefficient. 

In this paper, we propose a new procedure that overcomes the shortcomings of existing approaches for controlling the FDR in multivariate settings. 
The proposed procedure utilizes a stepwise rejection strategy to achieve exact FDR control, and is thus a BH-type procedure. However, our approach is based on test statistics, or $z$-values. To achieve FDR control for arbitrary $d$, at each step of the procedure, we use not the entire data, but the part corresponding to previously rejected hypotheses. 
In other words, once a hypothesis is rejected, we use its $z$-value to improve our estimate of the non-null distribution, and consequently the likelihood ratio statistics for other hypotheses. 
To show that the proposed procedure achieves exact control of the FDR, we first present a new proof for the BH step-down procedure. 
The new proof uses techniques from stochastic calculus and is amenable to higher dimensions. It also allows us to use random data-driven rejection regions. 
The key to controlling the FDR using this new procedure is that, in each step, we are not allowed to change our decision about previously rejected hypotheses, nor to use the $z$-values corresponding to hypotheses that are not yet rejected. Fortunately, since hypotheses are rejected according to their (estimated) likelihood ratio statistics, the previously rejected $z$-values contain the most information about the non-null distribution. We are thus able to achieve asymptotically optimal multiple hypothesis testing in high dimensions.  
Throughout the paper, we formulate our procedure and results in terms of more geometric notions of rejection regions rather than the ultimate univariate test statistic. However, these formulations become equivalent when the regions are level sets of $d$-dimensional test statistics.

The rest of the paper is organized as follows. In Section~\ref{sec:newpf}, we present our new proof of the BH step-down procedure. Using this result, in Section~\ref{sec:FDRinHD} we establish exact FDR control in higher dimensions. In Section~\ref{sec:oraclerule}, we first present an oracle decision rule for multivariate hypothesis testing assuming that the non-null distribution $f_1$ is known. The generalization of this approach to the settings where $f_1$ is estimated is presented Section~\ref{sec:approx}. The results of applying the proposed procedure to simulated and real data examples, as well as comparisons to existing approaches are presented in Sections~\ref{sec:sim} and \ref{sec:app}, respectively. We conclude the paper with a discussion in Section~\ref{sec:disc}.

\section{A New Proof for Step-Down BH Procedure}\label{sec:newpf}
Let $H_1,\ldots, H_n$ be $n$ null hypotheses and $p_1, \ldots, p_n$ be their corresponding $p$-values with $p_i \sim \mbox{Unif}[0,1]$  whenever the $i$th null hypothesis is true. We assume that $p_i$'s corresponding to null hypotheses are independent of each other and also independent of the $p$-values of non-null hypotheses.
Let  $p_{(1)}\leq \ldots \leq p_{(n)}$  be the ordered $p$-values, and  $H_{(1)}, \dots , H_{(n)}$ be their corresponding hypotheses. 

Consider the following empirical processes 
\begin{align*}
a_t &=\#\{\textnormal{null}\  p_i: p_i\leq t\}, \\
b_t &=\#\{ \textnormal{non-null}\  p_i: p_i\leq t\}, \\
r_t &=a_t+b_t.
\end{align*}
The false discovery proportion (FDP) and false discovery rate (FDR) can then be defined as 
\begin{align*}
\FDP(t) &=\dfrac{a_t}{r_t\vee 1}, \\
\FDR(t) &=\mathbb{E}\left[\dfrac{a_t}{r_t\vee 1}\right].
\end{align*}
The step-down BH procedure rejects the hypotheses $H_{(1)},\ldots, H_{(i_{sd})}$, where
\begin{equation}
i_{sd} = \max \{ i:p_{(i)}\leq \tau_{sd} \}, \text{ with }  
\tau_{sd} = \inf \left\{t:\  r_t\leq \dfrac{nt}{q}\right\}, 
\label{eq:BH}
\end{equation}
and accepts the rest.
Here $q$ is a user-specified ``error'' rate, or more precisely, the desired level of FDR control. 
\citet{BH95} showed that the step-down BH procedure controls the FDR at the level $q$. 

We next present an alternative proof for the step-down BH procedure, using tools from stochastic calculus. Using the new proof, in Section~\ref{sec:FDRinHD} we generalize the BH procedure to multivariate hypotheses. An important feature of the new proof is that it allows the rejection region to be random and to vary depending on the previously rejected hypotheses. This property will prove particularly useful in the development of FDR controlling procedures for multivariate hypothesis testing in Sections~\ref{sec:FDRinHD} and the proposed optimal procedure in Sections~\ref{sec:oraclerule}.

\begin{theorem}\label{thm1}
If the $p$-values corresponding to null hypotheses are independent of other $p$-values, then the step-down BH
algorithm \eqref{eq:BH}, denoted $BH_{sd}(qr_)$, controls the FDR at the level $q$, i.e.,
\begin{equation}\label{eq1}
\mathbb{E}\left[\dfrac{a_{\tau_{sd}}}{r_{\tau_{sd}}}\right]\leq \pi_{0}q\leq q. 
\end{equation}
\end{theorem}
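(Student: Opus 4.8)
The plan is to recast the false discovery proportion at the data-dependent cutoff $\tau_{sd}$ as a \emph{stopped} martingale and then bound its expectation by optional stopping. The crucial observation is that, because the null $p$-values are i.i.d.\ $\mbox{Unif}[0,1]$ and independent of everything else, the null counting process $a_t$ has an exact self-similarity in $t$ that becomes a martingale once time is run \emph{backward} from $t=1$ down to $t=0$.

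Concretely, I would introduce the backward filtration $\mathcal{G}_t = \sigma(a_s : s \geq t) \vee \sigma(\textnormal{all non-null }p\textnormal{-values})$, which grows as $t$ decreases, and first establish the identity $\EE[a_s \mid \mathcal{G}_t] = (s/t)\,a_t$ for $s \leq t$. This holds because, conditional on $\mathcal{G}_t$, the $a_t$ null $p$-values lying in $[0,t]$ are i.i.d.\ $\mbox{Unif}[0,t]$, so each falls below $s$ with probability $s/t$ independently of the non-null configuration. Dividing by $s$ shows that $M_t := a_t/t$ is a backward (reverse) martingale with $\EE[M_t] = \EE[a_1] = n_0$ for every $t$. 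In the stochastic-calculus language suggested by the later sections, this is the statement that $a_t$ is a counting process with an explicit compensator, whose associated reverse martingale is $a_t/t$.

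The second ingredient is the boundary relation at the cutoff. By construction $\tau_{sd}$ is a stopping time for the backward filtration, since $\{\tau_{sd} \geq t\}$ depends only on $\{r_s : s \geq t\}$, which is $\mathcal{G}_t$-measurable; moreover, at the rejection threshold the defining inequality is active, giving $r_{\tau_{sd}} \geq n\,\tau_{sd}/q$ (equivalently $p_{(i_{sd})} \leq i_{sd}\,q/n$). Hence, on the event of at least one rejection, $a_{\tau_{sd}}/r_{\tau_{sd}} \leq \tfrac{q}{n}\,(a_{\tau_{sd}}/\tau_{sd}) = \tfrac{q}{n}\,M_{\tau_{sd}}$, while on $\{r_{\tau_{sd}}=0\}$ the ratio vanishes. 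Applying optional stopping to the backward martingale $M_t$ gives $\EE[M_{\tau_{sd}}] = \EE[M_1] = n_0$, whence
\[
\EE\!\left[\frac{a_{\tau_{sd}}}{r_{\tau_{sd}}}\right] \leq \frac{q}{n}\,\EE[M_{\tau_{sd}}] = \frac{q}{n}\,n_0 = \pi_0 q \leq q,
\]
which is exactly \eqref{eq1}.

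The main obstacle is making the optional-stopping step rigorous. Reverse martingales are automatically uniformly integrable, so convergence as $t\downarrow 0$ is not the difficulty; the care lies in verifying that $\tau_{sd}$ is genuinely a stopping time of the backward filtration and that the optional-sampling theorem applies up to the boundary, together with the routine bookkeeping of the $r_t\vee 1$ truncation and the zero-rejection event. I would emphasize that the argument invokes only two structural facts: (i) the backward-martingale property of $a_t/t$, which comes solely from the uniform null distribution and its independence from the non-nulls, and (ii) that $\tau_{sd}$ is a backward stopping time at which the threshold constraint is active. Neither fact requires $\tau_{sd}$ to arise from a fixed univariate ranking, and this is precisely the flexibility that will allow the same argument to accommodate the random, data-adaptive nested rejection regions developed in Sections~\ref{sec:FDRinHD} and \ref{sec:oraclerule}.
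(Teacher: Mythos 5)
Your reverse-martingale ingredients are individually correct: conditional on $\mathcal{G}_t$ the $a_t$ null $p$-values lying in $[0,t]$ are i.i.d.\ uniform on $[0,t]$, so $M_t = a_t/t$ is indeed a backward martingale, and this is exactly the argument of \citet{STS04}. The gap is the claim that $\tau_{sd}$ is a stopping time for the backward filtration. It is not. The event $\{\tau_{sd}\geq t\}$ asserts that $r_s$ stayed above the line $ns/q$ for \emph{all} $s<t$, and this depends on where the $p$-values below $t$ sit inside $[0,t)$ --- information that $\mathcal{G}_t$ does not contain. Concretely, take $n=2$, both hypotheses null, $q=1/2$, so the boundary is the line $4t$. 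The configuration $\{0.05,\,0.45\}$ has its first down-crossing at $\tau_{sd}=0.25$ (the count $1$ meets the line $4t$ there), while the configuration $\{0.30,\,0.45\}$ is never above the line until $t=0.45$ and first meets it at $\tau_{sd}=0.5$. Both configurations satisfy $a_s=r_s=2$ for every $s\geq 0.45$, hence generate the same $\sigma$-field $\mathcal{G}_{0.45}$, yet $\{\tau_{sd}\geq 0.45\}$ holds in the second and fails in the first. So $\{\tau_{sd}\geq 0.45\}$ is not $\mathcal{G}_{0.45}$-measurable and optional stopping for $M_t$ cannot be invoked. This is precisely the structural difference between step-up and step-down: the step-up threshold $\sup\{t: r_t\geq nt/q\}$ is determined by the path on $[t,1]$ and is a legitimate backward stopping time (which is why the Storey--Taylor--Siegmund proof covers step-up), whereas the step-down threshold is a \emph{first}-hitting time, adapted only to the forward filtration.

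The paper's proof is, in effect, the repair of exactly this difficulty, and it is genuinely different from your plan. It works with the forward filtration $\mathcal{F}_s=\sigma(a_u,\,u\leq s)$, under which $\tau_{sd}$ is an honest hitting time (after conditioning on the non-null process $b_t$, which independence permits). Of course $a_t/t$ is not a forward martingale, so the paper compensates: with $\hat a_t = a_t - n_0t$ one has $\EE[\hat a_t\mid\mathcal{F}_s]=\frac{1-t}{1-s}\hat a_s$, hence $d\eta_s = d\hat a_s + \frac{\hat a_s}{1-s}\,ds$ defines a forward martingale. Integration by parts writes $\hat a_\tau/(\tau+\epsilon)$ as a stochastic integral against $\eta$ (killed by optional sampling) minus a drift term, whose sign is controlled by a stochastic-domination argument ($\hat a_s$ conditioned on $\{s\leq\tau\}$ is stochastically large because $\tau\geq s$ forces the path to stay above the boundary), and Fatou's lemma handles $\epsilon\downarrow 0$. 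Note the forward route delivers only the inequality $\EE[\hat a_\tau/\tau]\leq 0$, not the exact identity $\EE[M_{\tau_{sd}}]=n_0$ your argument would produce --- a further signal that the step-down threshold is a different object from the step-up one. The same obstruction would reappear in your closing remark about extending the backward argument to Sections~\ref{sec:FDRinHD} and~\ref{sec:oraclerule}: the adaptive regions of Definition~\ref{defn:NR} are built forward, from already-rejected $z$-values, so only a forward-filtration argument meshes with that adaptivity. To salvage your write-up, either switch the statement to the step-up procedure (where your proof is essentially complete) or redo the argument forward in time along the paper's lines.
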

\begin{proof}
By the definition of $\tau_{sd}$ in \eqref{eq:BH}, the statement \eqref{eq1} is equivalent to 
\begin{equation*}
\mathbb{E}\left[\dfrac{a_{\tau_{sd}}}{r_{\tau_{sd}}}\right]=\mathbb{E}\left[\dfrac{q}{n}\dfrac{a_{\tau_{sd}}}{\tau_{sd}}\right] \leq q. 
\end{equation*}
However, $r_t =a_t+b_t \leq \dfrac{nt}{q}$, and thus, $a_t\leq \dfrac{nt}{q}-b_t$.
Given that $b_t$ is independent of $a_t$, it suffices to show that $\mathbb{E}\left[\dfrac{a_{\tau_{sd}}}{r_{\tau_{sd}}} \mid b_t\right]\leq q$. We can thus assume, without loss of generality,  that $b_t$ is an arbitrary deterministic process. 

Now, let $\hat{a}_t=a_t-n_0 t$, where $n_0$ is the number of null hypotheses. Then, $\tau_{sd}$ can be redefined as 
\begin{equation*}
\tau_{sd}= \inf\left\{ t:\  \hat{a}_t\leq \dfrac{nt}{q}-b_t -n_0 t \right\}.
\end{equation*}
We can then write
\begin{equation*}
\mathbb{E}\left[\dfrac{a_{\tau_{sd}}}{r_{\tau_{sd}}}\right] = 
\mathbb{E}\left[\dfrac{q}{n}\dfrac{a_{\tau_{sd}}}{\tau_{sd}}\right] \leq \mathbb{E}\left[\dfrac{q}{n_0}\dfrac{a_{\tau_{sd}}}{\tau_{sd}}\right] \leq q.
\end{equation*}
Equivalently, 
\begin{equation*}
\mathbb{E}\left[\dfrac{1}{n_0}\dfrac{a_{\tau_{sd}}}{\tau_{sd}}\right] \leq 1, \text{ or, }
\mathbb{E}\left[\dfrac{\hat{a}_{\tau_{sd}}}{\tau_{sd}}\right] \leq 0.
\end{equation*}
The statement of the theorem now follows by the following property of the process $\hat{a}_t$, established in Lemma~\ref{lem}.
\end{proof}

\begin{lemma}\label{lem}
Let $\hat{a}_t$ be a process as defined above and $g$ be an arbitrary deterministic function. For a stopping time $\tau$ defined as 
\begin{center}
$\tau=\inf\left\{t: \hat{a}_t\leq g(t)\right\}$, 
\end{center}
we have
\begin{equation}\label{eq2}
\EE\left[\dfrac{\hat{a}_{\tau}}{\tau}\right] \leq 0. 
\end{equation}
\end{lemma}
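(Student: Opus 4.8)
The plan is to expose a martingale structure hidden in the centred counting process $\hat a_t = a_t - n_0 t$ and then read off \eqref{eq2} by an optional-sampling argument. The natural object is not $\hat a_t$ itself, which, as a compensated counting process, becomes a forward martingale only after multiplication by the integrating factor $1/(1-t)$, but the normalised process $\hat a_t/t$. First I would show that $M_t := a_t/t$ is a martingale with respect to the backward filtration $\mathcal{G}_t := \sigma(a_s : s \ge t)$, i.e. $\EE[a_s/s \mid \mathcal{G}_t] = a_t/t$ for $s \le t$. This rests on the elementary fact that, conditional on $a_t = k$, the $k$ null $p$-values lying in $[0,t]$ are i.i.d. $\mbox{Unif}[0,t]$, so that $a_s \mid \mathcal{G}_t \sim \mbox{Binomial}(a_t, s/t)$ and hence $\EE[a_s \mid \mathcal{G}_t] = (s/t)\,a_t$. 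Consequently $\hat a_t/t = M_t - n_0$ is also a backward martingale, with $\EE[\hat a_t/t]=0$ for every $t \in (0,1]$ and $\hat a_1/1 = 0$.

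Next I would analyse the random time $\tau = \inf\{t : \hat a_t \le g(t)\}$ in this backward picture. Because $a_t$ increases only by unit upward jumps while $-n_0 t$ decreases continuously, the process $\hat a_t$ can cross the lower barrier $g$ only through its continuous (downward-drifting) part; thus $\hat a_t > g(t)$ for $t < \tau$ and the barrier is met from above at $\tau$. Passing to reverse time $u = 1-t$, so that $Z_u := \hat a_{1-u}/(1-u)$ is a genuine forward martingale started at $Z_0 = 0$, the time $\tau$ corresponds to the last exit $1-\tau = \sup\{u : Z_u \le g(1-u)/(1-u)\}$ from the stopping region. Applying optional sampling in this reverse filtration, together with the one-sided nature of the crossing, then gives $\EE[\hat a_\tau/\tau] \le \EE[\hat a_1/1] = 0$, which is exactly \eqref{eq2}.

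I expect the main obstacle to be the optional-sampling step itself. The subtlety is that $\tau$, read in the backward filtration, is a last-exit (honest) time rather than a stopping time, so the clean identity $\EE[\hat a_\tau/\tau]=0$ need not hold and must be relaxed to the inequality \eqref{eq2}; securing the correct direction requires controlling the overshoot of $\hat a$ at the barrier and the behaviour of the martingale on the excursion following its last visit to the region, for instance via an honest-time decomposition or by conditioning directly on the configuration of the null $p$-values and summing over which of them is visited last. A secondary, more routine obstacle is integrability as $t \downarrow 0$: although backward martingales are uniformly integrable on any interval $[\varepsilon,1]$, one has $\hat a_t/t \to 0$ only almost surely and not in $L^1$ as $t \to 0$, so I would either truncate at the first null jump or argue that $\{\tau \le \varepsilon\}$ is negligible for the $g$ of interest, and then pass to the limit.
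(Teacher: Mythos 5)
Your proposal has a genuine gap at its central step, and you in fact name it yourself: in the reverse-time picture the time $1-\tau$ is a \emph{last-exit} (honest) time for the backward filtration $\mathcal{G}_t=\sigma(a_s\colon s\ge t)$, not a stopping time, and optional sampling simply does not apply to such times. The sentence ``applying optional sampling in this reverse filtration, together with the one-sided nature of the crossing, then gives $\EE[\hat a_\tau/\tau]\le \EE[\hat a_1/1]=0$'' is therefore not a proof step: the entire content of the lemma is precisely to control the bias of the reverse martingale $\hat a_t/t$ when evaluated at this non-stopping time, and neither the ``one-sided crossing'' remark nor your suggested remedies (honest-time decomposition, conditioning on which null $p$-value is visited last) are carried out, nor is it evident they would produce an inequality with the correct sign. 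This obstacle is not a removable technicality; it is exactly why step-down thresholds are harder than step-up ones. For the step-up (BH) threshold $\sup\{t: r_t\ge nt/q\}$, the event $\{\tau\ge t\}$ is measurable with respect to $\sigma(r_s\colon s\ge t)$, so it \emph{is} a backward stopping time and the optional-sampling argument of \citet{STS04} goes through; your $\tau=\inf\{t:\hat a_t\le g(t)\}$ is a stopping time only in forward time. (A secondary slip: as $t\downarrow 0$ one has $\hat a_t/t\to -n_0$ almost surely, not $0$; the mean stays $0$ only because of a non-uniformly-integrable spike of order $1/t$ on the event $\{a_t\ge 1\}$, a further warning that limit manipulations with this reverse martingale are delicate.)

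The paper avoids the issue by never leaving the forward filtration $\mathcal{F}_s=\sigma(a_u,\,u\le s)$, in which $\tau$ is a legitimate stopping time. Instead of normalizing by $t$, it compensates $\hat a$ itself: using $\EE[\hat a_t\mid\mathcal{F}_s]=\frac{1-t}{1-s}\hat a_s$, the process $\eta_t$ with $d\eta_s=d\hat a_s+\frac{\hat a_s}{1-s}\,ds$ is a forward martingale, and a pathwise integration by parts gives
\[
\frac{\hat a_\tau}{\tau+\epsilon}=\int_0^\tau \frac{1}{s+\epsilon}\,d\eta_s-\int_0^\tau \frac{1+\epsilon}{(s+\epsilon)^2(1-s)}\,\hat a_s\,ds .
\]
Optional sampling (now valid) annihilates the first term, and the sign of the second is controlled by the observation that conditioning on $\{s\le\tau\}$, i.e.\ on the path having stayed above the lower barrier up to time $s$, makes $\hat a_s$ stochastically larger, so that $\EE\left[\hat a_s 1_{\{s\le\tau\}}\right]\ge 0$; Fatou's lemma then removes the $\epsilon$. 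If you wish to salvage the reverse-time route, you would need to prove an analogue of precisely this monotonicity statement at the honest time $1-\tau$ --- which is the part currently missing from your argument.
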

\begin{proof}[Proof of Lemma~\ref{lem}]
We first prove that 
$\EE\left[\dfrac{\hat{a}_{\tau}}{\tau +\epsilon}\right] \leq 0$. 
Note that
\begin{equation}\label{eq3}
\dfrac{\hat{a}_{\tau}}{\tau +\epsilon}=\int_{0}^{\tau}\dfrac{1}{s +\epsilon}\,d \hat{a}_s - \int_{0}^{\tau}\dfrac{\hat{a}_s}{(s +\epsilon)^2}\,ds.
\end{equation}
Now, we claim that the process $\eta _t$ defined as
\begin{equation}\label{eq4}
d\eta _s =d\hat{a}_s +\dfrac{\hat{a}_s}{1-s}\ ds,
\end{equation}
is a martingale with respect to the filtration $\mathcal{F}_s = \sigma(a_u, u\leq s)$. 
To prove this claim, it suffices to show that  $\mathbb{E}\left[\eta_t - \eta_s \mid \mathcal{F}_s\right]=0$. But,
\begin{equation}\label{eq5}
\begin{split}
\mathbb{E}\left[\eta_t - \eta_s \mid \mathcal{F}_s\right] &= 
\mathbb{E}\left[\hat{a}_t - \hat{a}_s +\int_{s}^{t} \dfrac{\hat{a}_u}{1-u}\ du \mid \mathcal{F}_s\right]\\
 &= \mathbb{E}\left[\hat{a}_t -\hat{a}_s \mid \mathcal{F}_s\right] +\int_{s}^{t} \dfrac{\mathbb{E}\left[\hat{a}_u \mid \mathcal{F}_s\right]}{1-u}\ du.
 \end{split}
 \end{equation}
It is then easy to see that the process $\hat{a}_t$ satisfies 
\begin{equation*}
\mathbb{E}\left[\hat{a}_t \mid \mathcal{F}_s\right]=\dfrac{1-t}{1-s}\hat{a}_s,  
\end{equation*}
which is true because $a_t-a_s$ is independent of $\mathcal{F}_s$ and has binomial distribution $B\left(n_0 -a_s, \dfrac{t-s}{1-s}\right)$. 
The right-hand side of Equation~\eqref{eq5} can thus be written as
\[
\mathbb{E}[\hat{a}_t | \mathcal{F}_s] = \dfrac{s-t}{1-s}\hat{a}_s + \int_{s}^{t} \frac{\frac{1-u}{1-s}\hat{a}_s}{1-u} du=0.
\]
Now, substituting Equation~\eqref{eq4} into Equation~\eqref{eq3}, we get
\begin{eqnarray*}
\dfrac{\hat{a}_{\tau}}{\tau +\epsilon}&=&\int_{0}^{\tau}\dfrac{1}{s +\epsilon}\,d \eta _s - \int_{0}^{\tau}\dfrac{\hat{a}_s}{(s +\epsilon)(1-s)}\,ds- \int_{0}^{\tau}\dfrac{\hat{a}_s}{(s +\epsilon)^2}\,ds\\
&=&\int_{0}^{\tau}\dfrac{1}{s +\epsilon}\,d \eta _s - \int_{0}^{\tau}\dfrac{1+\epsilon}{(s +\epsilon)^2(1-s)}\hat{a}_s\,ds.
\end{eqnarray*} 
Taking expectation, we obtain:
\begin{equation}\label{eq:int}
\mathbb{E}\left[\dfrac{\hat{a}_{\tau}}{\tau +\epsilon}\right]=\mathbb{E}\left[\int_{0}^{\tau}\dfrac{1}{s +\epsilon}\,d \eta _s\right] -\mathbb{E}\left[ \int_{0}^{\tau}\dfrac{1+\epsilon}{(s +\epsilon)^2(1-s)}\hat{a}_s\,ds\right].
\end{equation}

The first term in Equation~\eqref{eq:int} is an integral with respect to a martingale and hence a martingale. Also, $\tau$ is a stopping time. So,  by the optional sampling theorem \cite{Ka13}, the expectation vanishes. Thus, 
\begin{equation*}
E\left[\dfrac{\hat{a}_{\tau}}{\tau +\epsilon}\right]=-E\left[ \int_{0}^{\tau}\dfrac{1+\epsilon}{(s +\epsilon)^2(1-s)}\hat{a}_s\,ds\right].
\end{equation*}
From this we have 
\begin{eqnarray*}
\mathbb{E}\left[ \int_{0}^{\tau}\dfrac{1+\epsilon}{(s +\epsilon)^2(1-s)}\hat{a}_s\,ds\right] 
&=& \mathbb{E}\left[ \int_{0}^{\infty}\dfrac{1+\epsilon}{(s +\epsilon)^2(1-s)}\hat{a}_s1_{\{s \leq \tau \} }\,ds\right],\\ 
&=& \int_{0}^{\infty}\dfrac{1+\epsilon}{(s +\epsilon)^2(1-s)}\mathbb{E}\left[\hat{a}_s1_{\{s \leq \tau \} }\right]\,ds, \\ 
&=& \int_{0}^{\infty}\dfrac{(1+\epsilon)P(\tau \leq s)}{(s +\epsilon)^2(1-s)}\mathbb{E}\left[\hat{a}_s | s \leq \tau\right]\,ds.
\end{eqnarray*}
However, $ t\leq \tau$ means that $\hat{a}_t\geq g(t)$ for $0 \leq s \leq t$ and hence $\hat{a}_s |\{t\leq \tau\}$ is stochastically larger than $\hat{a}_s$. Thus, $\mathbb{E}\left[\hat{a}_s \mid t\leq \tau\right] \geq \mathbb{E}\left[\hat{a}_s\right]=0$. To complete the proof it is sufficient to use Fatou's lemma:
\begin{equation*}
\mathbb{E}\left[\dfrac{\hat{a}_{\tau}}{\tau}\right] =
\mathbb{E}\left[\liminf _{\epsilon\to\ 0}\dfrac{\hat{a}_{\tau}}{\tau +\epsilon}\right] \leq  
\liminf _{\epsilon\to\ 0}\mathbb{E}\left[\dfrac{\hat{a}_{\tau}}{\tau +\epsilon}\right] \leq 0.
\end{equation*}
\end{proof}

\section{FDR Control for Multivariate $z$-Values}
\label{sec:FDRinHD}
In this section, we extend the step-down BH procedure for multivariate test statistics. The main difficulty in this case arises from the fact that in two, or higher, dimensions we have many choices for enlarging the rejection region. Thus, the original argument based on $p$-value rankings is not directly applicable. 
We will prove that the control of $\FDR$ remains valid, even when the method for enlarging the rejection region at each level is adapted to $z$-values rejected prior to that level. 
This finding can help improve the power of the test---i.e., reduce the FNR---since by observing previously rejected $z$-values, which likely correspond to non-null hypotheses, we can better estimate the non-null distribution and exploit this information to define the next rejection region more efficiently. An optimal algorithm based on this strategy is presented in Section~\ref{sec:oraclerule}. 

\subsection{Step-Down BH Procedures in Higher Dimensions}\label{sec:stepdownBH}
For a given domain $\mathcal{R} \subset \mathbb{R}^d$, we define the following empirical processes 
\begin{align*}
a(\mathcal{R}) &=\#\{\textnormal{null}\  z_i: z_i\in \mathcal{R}\},\\
b(\mathcal{R}) &=\#\{\textnormal{non}\ \textnormal{null}\  z_i: z_i \in \mathcal{R}\},\\
r(\mathcal{R}) &=a(\mathcal{R})+b(\mathcal{R}).
\end{align*}
Then,
\begin{align*}
\FDP(\mathcal{R}) &=\dfrac{a(\mathcal{R})}{r(\mathcal{R}) \vee 1},\\
\FDR(\mathcal{R}) &=\mathbb{E}\left[\dfrac{a(\mathcal{R})}{r(\mathcal{R}) \vee 1}\right].
\end{align*}
\begin{figure}[t]
  \centering
  \includegraphics[trim = 25mm 25mm 20mm 30mm, clip, width=7cm]{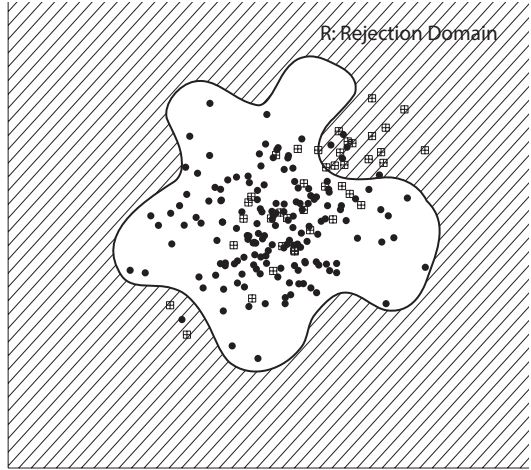}
  \caption{Illustration of adaptive rejection domains; the shaded area $\mathcal{R}$ represents the rejection domain, circles represent null $z$-values and squares represent alternative hypotheses $z$-values.}
\end{figure}

We now present a formal definition of the new step-down procedure. 
\begin{defn}\label{defn:NR}
Let $\mathcal{F}_{\mathcal{R}}=\sigma\left(1_{\{z_i \in B \}}:B \subset {\mathcal R},\ i=1,\ldots,n\right) $ for all $\mathcal{R} \subset \mathbb{R}^d$. An adaptive step-down BH algorithm is a 
family of increasing stopping sets $\{\mathcal{R}_{t}\}_{0 \leq t \leq 1}$ satisfying the following conditions 
\begin{enumerate}
\item for $s\leq t$ we have $\mathcal{R}_{s} \subset \mathcal{R}_{t}$; 
\item $F_0(\mathcal{R}_{t})=t$;  
\item $\forall A \subset \mathbb{R}^d \ ,\{\omega:\mathcal{R}_t (\omega) \subset A \} \in \mathcal{F}_A$. 
\end{enumerate}
\end{defn}
Definition~\ref{defn:NR} requires that the rejection regions form a family of increasing stopping sets developed based on previously rejected hypotheses. 
For the family of increasing stopping sets in Definition~\ref{defn:NR}, our proposed generalized step-down BH method reject all $z$-value in $\mathcal{R}_{\tau}$, where
\begin{equation}\label{eq:tau}
 \tau = \inf \left\{t :  r(\mathcal{R}_t)\leq \dfrac{nt}{q} \right\}.
\end{equation}  
We next show that, for any error rate $q$, the proposed generalized BH procedure with the threshold defined in \eqref{eq:tau} controls the FDR at the level $q$.  
\begin{figure}[t]
  \centering
  \includegraphics[trim = 10mm 10mm 10mm 20mm, clip,width=6cm]{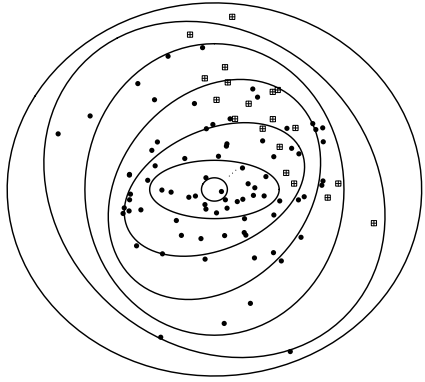}
  \caption{Illustration of the family of increasing nested rejection regions $(\mathcal{R}_t)_{0 \leq t \leq 1}$; the rejection regions are constructed based on previously rejected $z$-values and are not restricted to have similar shapes.}
\end{figure}

\begin{theorem}\label{th3}
If the $z$-values corresponding to null hypotheses are independent of other $z$-values, then the generalized step-down BH algorithm controls the  false discovery rate at $q$,
\begin{equation}\label{eq10}
\mathbb{E} \left[\dfrac{a(\mathcal{R}_{\tau})}{r(\mathcal{R}_{\tau})}\right] \leq \pi_{0}q \leq q.
\end{equation}
\end{theorem}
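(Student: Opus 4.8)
The plan is to run the argument of Theorem~\ref{thm1} and Lemma~\ref{lem} with the scalar $t=F_0(\mathcal{R}_t)$ playing the role of the ``time'' axis. Exactly as in the proof of Theorem~\ref{thm1}, the definition of $\tau$ in \eqref{eq:tau} forces $r(\mathcal{R}_\tau)=n\tau/q$ at the stopping level, so that
\begin{equation*}
\mathbb{E}\left[\frac{a(\mathcal{R}_\tau)}{r(\mathcal{R}_\tau)}\right]=\frac{q}{n}\,\mathbb{E}\left[\frac{a(\mathcal{R}_\tau)}{\tau}\right].
\end{equation*}
Setting $\hat{a}(\mathcal{R}_t)=a(\mathcal{R}_t)-n_0 t$ and invoking Condition~2, namely $F_0(\mathcal{R}_t)=t$, we have $\mathbb{E}[a(\mathcal{R}_\tau)/\tau]=\mathbb{E}[\hat{a}(\mathcal{R}_\tau)/\tau]+n_0$, so the claimed bound \eqref{eq10} is equivalent to the single inequality $\mathbb{E}[\hat{a}(\mathcal{R}_\tau)/\tau]\le 0$, with the factor $n_0/n=\pi_0$ producing the sharper constant $\pi_0 q$.

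Everything then hinges on showing that, as a process indexed by $t$, the centered null count $\hat{a}(\mathcal{R}_t)$ has exactly the martingale--compensator structure exploited in Lemma~\ref{lem}. The target identity is the conditional increment law
\begin{equation*}
a(\mathcal{R}_t)-a(\mathcal{R}_s)\,\big|\,\mathcal{F}_{\mathcal{R}_s}\ \sim\ B\!\left(n_0-a(\mathcal{R}_s),\ \frac{t-s}{1-s}\right),
\end{equation*}
which immediately yields $\mathbb{E}[\hat{a}(\mathcal{R}_t)\mid\mathcal{F}_{\mathcal{R}_s}]=\frac{1-t}{1-s}\,\hat{a}(\mathcal{R}_s)$, the precise relation driving \eqref{eq4}. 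The intuition is a spatial probability-integral transform: given $\mathcal{F}_{\mathcal{R}_s}$, the $n_0-a(\mathcal{R}_s)$ null $z$-values lying outside $\mathcal{R}_s$ are still i.i.d.\ with law $F_0$ restricted to $\mathcal{R}_s^c$, and since the shell $\mathcal{R}_t\setminus\mathcal{R}_s$ carries $F_0$-measure $t-s$ out of the available $F_0(\mathcal{R}_s^c)=1-s$, each exterior null is swept in with probability $(t-s)/(1-s)$, independently. Granting this, the martingale $\eta$ of \eqref{eq4}, the optional-sampling step, the stochastic-dominance bound, and the Fatou limit all transfer verbatim from Lemma~\ref{lem} and deliver $\mathbb{E}[\hat{a}(\mathcal{R}_\tau)/\tau]\le 0$.

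The hard part is justifying the conditional binomial law for a family $\{\mathcal{R}_t\}$ that is itself random: the region grows adaptively, chasing the rejected $z$-values, so $\mathcal{R}_t\setminus\mathcal{R}_s$ is not a fixed set of measure $t-s$. This is exactly where Condition~3, $\{\mathcal{R}_t\subset A\}\in\mathcal{F}_A$, is essential---it is the set-indexed analogue of the statement that a stopping time depends only on the past, and it is what guarantees that conditioning on $\mathcal{F}_{\mathcal{R}_s}$ does not bias the configuration of null points exterior to $\mathcal{R}_s$. A second subtlety, absent in the univariate case, is the non-null term: in Theorem~\ref{thm1} one conditions on the non-null $p$-values and treats $b_t$ as deterministic, which is legitimate only because the underlying regions $[0,t]$ are fixed. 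Here the regions depend on the rejected null $z$-values as well, so $b(\mathcal{R}_t)$ remains coupled to the null process and the barrier $g(t)=nt/q-n_0 t-b(\mathcal{R}_t)$ is random rather than deterministic. I would therefore carry out the entire argument with respect to the full filtration $\{\mathcal{F}_{\mathcal{R}_t}\}$, re-examining in particular the stochastic-dominance step to confirm that conditioning on $\{s\le\tau\}$ still forces $\mathbb{E}[\hat{a}(\mathcal{R}_s)\mid s\le\tau]\ge 0$ once the exchangeability guaranteed by Condition~3 is in hand.
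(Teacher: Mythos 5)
Your proposal is correct and is essentially the paper's own argument: the same reparametrization by $t=F_0(\mathcal{R}_t)$, the same centering $\hat{a}(\mathcal{R}_t)=a(\mathcal{R}_t)-n_0t$, the same conditional binomial increment law $B\left(n_0-a(\mathcal{R}_s),\ \frac{t-s}{1-s}\right)$ justified by the stopping-set property (Condition~3), and the same transfer of the martingale, optional-sampling, stochastic-dominance, and Fatou steps from Lemma~\ref{lem}. The coupling issue you flag for $b(\mathcal{R}_t)$ is treated in the paper exactly as you propose, namely by running the entire argument under the filtration $\mathcal{F}_t=\sigma\left(1_{\{z_i\in\mathcal{R}_s\}},\,s\leq t\right)$, with respect to which the barrier is adapted and $\tau$ remains a stopping time.
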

\begin{proof}
First note that $\tau$ is a stopping time with respect to the filtration $\mathcal{F}_t = \sigma(1_{\{ z_i \in \mathcal{R}_{s} \}}, s\leq t) $. Similar to Section~\ref{sec:newpf}, we can rewrite $\tau$ as
 \begin{eqnarray*}
 \tau = \inf \Big\{t : a(\mathcal{R}_t)\leq  \underbrace {\dfrac{nt}{q} -b(\mathcal{R}_t)}_{g(t)}  \Big\}.
 \end{eqnarray*}  
If we define $\hat{a}(\mathcal{R}_t)=a(\mathcal{R}_t)-n_0 F_0(t)$,  then the assertion of Theorem~\ref{th3} is equivalent to
\begin{equation*}
\mathbb{E}\left[\dfrac{a(\mathcal{R}_\tau)}{r(\tau)}\right] =
\mathbb{E}\left[\dfrac{q}{n}\dfrac {a(\mathcal{R}_\tau)}{F_0(\tau)}\right] \leq 
\mathbb{E}\left[\dfrac{q}{n_0}\dfrac {a(\mathcal{R}_\tau)}{F_0(\tau)}\right] \leq q, 
\end{equation*}
which is, in turn, equivalent to 
\begin{equation*}
\mathbb{E}\left[\dfrac{1}{n_0}\dfrac {a(\mathcal{R}_\tau)}{F_0(\tau)}\right] \leq 1 
\Longleftrightarrow 
\mathbb{E}\left[\dfrac {\hat{a}(\mathcal{R}_\tau)}{F_0(\tau)}\right] \leq 0
\Longleftrightarrow 
\mathbb{E}\left[\dfrac {\hat{a}(\mathcal{R}_\tau)}{\tau}\right] \leq 0.
\end{equation*}
Now, since the null $z$-values are independent  and $\mathcal{R}_s$ is a stopping set,  $a(\mathcal{R}_t)-a(\mathcal{R}_s)$ has a binomial distribution $B\left(n_0 -a(\mathcal{R}_s), \dfrac{t-s}{1-s}\right)$. 
Thus, 
 \begin{equation*}
\mathbb{E}\left[a(\mathcal{R}_t)| \mathcal{F}_s\right]=a(\mathcal{R}_s) +(n_0 -a(\mathcal{R}_s)) \dfrac{t-s}{1-s}=\dfrac{1-t}{1-s}a(\mathcal{R}_s) +\dfrac{t-s}{1-s}n_0. 
\end{equation*}
Given the above equation, and as in the proof of Lemma~\ref{lem}, we conclude that
\begin{equation}\label{eq11}
\eta _t = a(\mathcal{R}_t)-\int _{0}^{t} \dfrac{n_0 -a(\mathcal{R}_s)}{1-s}\ ds
\end{equation}
is an $\mathcal{F}_s$-martingale. 
The rest of the proof is exactly the same as the proof of Theorem~\ref{thm1}. 
\end{proof} 
 
\section{Oracle Decision Rule for Multivariate Hypothesis Testing}\label{sec:oraclerule}
In this section, we construct an oracle decision rule based on \emph{nested rejection regions} for multiple testing in multivariate settings. Inspired by the compound decision rule framework of \cite{SC07}, we first develop an optimal oracle procedure in Theorem~\ref{th4}. 
Then, in Section~\ref{sec:approx}, we propose a nested rejection region algorithm that approximates the oracle rejection rule; we show that our algorithm is asymptotically optimal under weak conditions. 

Consider $n$ hypotheses $\mathrm{H}_{i}, i=1, \ldots, n$. Denote by $\theta_{i}$ the significance indicator for the $i^\text{th}$ hypotheses:  $\theta_{i}$ is $0$ if the $i^{th}$ null hypothesis is true and 1 otherwise. 
Throughout this section, we assume that $\theta_{1}$,\ldots,$\theta_{n}$ are independent and identically distributed Bernoulli variables with success probability $\pi_1 = 1-\pi_0$. The proposed procedure can be generalized to the setting of Markov random fields, as in \citet{SC09}. However, such extensions are beyond the scope of the current manuscript and is left to future research.   
Let $Z=(Z_{1},\ldots,\ Z_{n})$ be an independent sequence of test statistics for the $n$ hypotheses $H_i$ with the following density,
 \begin{equation}\label{eq:mixture}
 	f(Z_i =z|\theta_i = k) \sim f_k(z) \  \text{ for } k = 0, 1 \text{ and }  i=1,\ldots,n.
 \end{equation}
Our goal is to construct a multiple testing procedure, based on a realization $z = (z_{1},\ \ldots,\ z_{m})$, which achieves the maximum number of rejected hypothesis, while controlling the FDR at a pre-specified level $q$. In the next theorem, we assume that the value of $\pi_0$ and density function $f_1$ are known. 
Under these assumptions, part 1 of Theorem~\ref{th4} characterizes an optimal set $S$ in terms of {\Lfdr} with optimal power, while controlling the FDR at the level $q$. 
Parts 2 and 3 then establish the existence of such a set, whereas part 4 shows that the set $S$ can be obtained by a family of increasing nested stopping sets, defined in Section~\ref{sec:stepdownBH}. 
The last part of the theorem provides a specific recipe for constructing optimal rejection region $S$. Specifically, it shows that the optimal set $S$ can be constructed by applying the BH procedure to the nested regions constructed based on parts 2-4. 
 
 \begin{theorem}\label{th4}
The following hold for a multiple testing problem with $n$ independent hypotheses represented by $d$-dimensional test statistics $z_i, i=1, \ldots, n$ from the mixture distribution \eqref{eq:mixture}.
\begin{enumerate}
\item For a given $q$ if there exists an $l$ such that the set $S=\{x:\Lfdr(x)<l \}$  satisfies
 	\begin{equation*}
			\pi_0  \int_{S} f_0(x)dx = q\int_{S} f(x)dx, 
 	\end{equation*}
 	then for any other rejection set $T$ such that $\int_{T}\pi_0 f_0(x)dx \leq q\int_{T} f(x)dx$,
 	\begin{equation*}
 		\int_{T}f_1(x)dx\leq\int_{S}f_1(x)dx.
 	\end{equation*}
\item If $\inf(\Lfdr(x))>q$, then for any set $U$ with positive Lebesgue measure $\mu$, 
 	\begin{equation*}
 		\int_{U}\pi_0 f_0(x)dx > q\int_{U} f(x)dx. 
 	\end{equation*}
\item If $f$ is an analytic function and there exists an $x$ such that $\Lfdr(x)<q$, then the optimal set $S$ described in 1 exists. 
\item Under the conditions of part 3, there exists an increasing sequence of nested stopping sets $\{\Omega_t\}_{t=0}^{1}$ that satisfy the requirements of Definition~\ref{defn:NR}, and for which $\Omega_t=\{x:\Lfdr(x)< l_t \}$. 
\item Let $\tau^{ABH}_n$ be the stopping index resulting from applying the adaptive BH procedure of Section~\ref{sec:FDRinHD}, 
i.e., 
	\[
	\tau^{ABH}_n= \arg\min_{t} \left\{ q\leq \dfrac{\pi_0 F_0(\Omega_t)}{\hat{F}(\Omega_t)} \right\}. 
	\]
	Further, let $\tau$ be the index from part 4 such $S=\Omega_\tau$.  Then, as $n {\longrightarrow}\infty$, $\tau^{ABH}_n \overset{p}{\longrightarrow} \tau$. 
\end{enumerate}
\end{theorem}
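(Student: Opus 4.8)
The plan is to read part~5 as a standard consistency (``argmax-continuity'') statement: the data-driven stopping index is the level at which an \emph{empirical} FDR curve crosses $q$, and the goal is to show it converges to the level at which the corresponding \emph{population} curve crosses $q$. The key simplification is that in the oracle setting $\pi_0$ and $f_1$ (hence $\Lfdr$, $f_0$, $f$) are known, so the nested sets $\Omega_t=\{x:\Lfdr(x)<l_t\}$ from part~4 are \emph{deterministic}; the only randomness in the criterion sits in the empirical mass $\hat F(\Omega_t)=n^{-1}\#\{i:z_i\in\Omega_t\}$. I would define the population FDR curve $G(t)=\pi_0 F_0(\Omega_t)/F(\Omega_t)=\pi_0 t/F(\Omega_t)$ (using $F_0(\Omega_t)=t$ from Definition~\ref{defn:NR}) and its empirical counterpart $\hat G_n(t)=\pi_0 t/\hat F(\Omega_t)$, so that $\tau^{ABH}_n$ is the level at which $\hat G_n$ meets $q$, while part~1 supplies $G(\tau)=q$ at $S=\Omega_\tau$.

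First I would reduce the uniform convergence of $\hat G_n$ to $G$ to the classical one-dimensional Glivenko--Cantelli theorem. Writing $W_i=\Lfdr(z_i)$, the event $\{z_i\in\Omega_t\}$ is precisely $\{W_i<l_t\}$, so $\hat F(\Omega_t)$ is the empirical cdf of the i.i.d.\ scalars $W_1,\dots,W_n$ evaluated at $l_t$, while $F(\Omega_t)=\PP(W<l_t)$ is its population cdf. Since $t\mapsto l_t$ is monotone, $\sup_t|\hat F(\Omega_t)-F(\Omega_t)|\le\sup_\ell|\hat F_W(\ell)-F_W(\ell)|\overset{p}{\to}0$, and on any neighborhood of $\tau$, where $F(\Omega_\tau)=\pi_0\tau/q>0$ is bounded away from zero, this upgrades to $\sup_t|\hat G_n(t)-G(t)|\overset{p}{\to}0$.

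Second I would show that $G$ crosses $q$ \emph{transversally} at $\tau$, i.e.\ $G(\tau-\epsilon)<q<G(\tau+\epsilon)$ for all small $\epsilon>0$. Here I use the representation $G(t)=\EE[\Lfdr(Z)\mid Z\in\Omega_t]=\EE[W\mid W<l_t]$, which makes $G$ nondecreasing because enlarging a lower level set of $\Lfdr$ only adds larger values of $W$. Continuity and strict monotonicity near $\tau$ then follow from the absence of atoms and flat stretches of the law of $W$ around $l_\tau$, which is exactly where the analyticity of $f$ from part~3 is used: it forces each level set $\{\Lfdr=\ell\}$ to have Lebesgue measure zero and precludes $\Lfdr$ being constant on an open set, so $l_t$ and hence $\Omega_t$ move strictly as $t$ increases.

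Finally I would run the sandwich argument on a fixed neighborhood $[\tau-\epsilon_0,\tau+\epsilon_0]$ on which $F(\Omega_t)\ge c>0$. Fix $\epsilon\in(0,\epsilon_0)$ and set $\delta=\min\{q-G(\tau-\epsilon),\,G(\tau+\epsilon)-q\}>0$ by transversality. On the event $\{\sup_t|\hat G_n(t)-G(t)|<\delta\}$, whose probability tends to $1$, monotonicity of $G$ gives $\hat G_n(t)<q$ for every $t\le\tau-\epsilon$ (with $\hat F>0$) and $\hat G_n(\tau+\epsilon)>q$, so the operative crossing satisfies $\tau-\epsilon\le\tau^{ABH}_n\le\tau+\epsilon$; letting $n\to\infty$ and then $\epsilon\to0$ yields $\tau^{ABH}_n\overset{p}{\to}\tau$. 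I expect the main obstacle to be the second step: ruling out a flat crossing and verifying continuity and strict monotonicity of $G$ at $\tau$, since this is what makes the limit unique and is the only place the analyticity hypothesis is genuinely needed. A secondary technical point is the degeneracy of $\hat G_n(t)=\pi_0 t/\hat F(\Omega_t)$ for small $t$ where $\hat F(\Omega_t)=0$, which I handle by localizing the whole argument to the neighborhood of $\tau$ where $F(\Omega_t)$ is bounded below, so that the small-$t$ region is irrelevant to the selected rejection set.
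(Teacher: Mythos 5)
Your proposal is correct and takes essentially the same route as the paper: uniform convergence of the empirical mass over the nested family (the paper invokes the DKW inequality, which you make rigorous by reducing to the one-dimensional Glivenko--Cantelli theorem via $W_i=\Lfdr(z_i)$), monotonicity of the population curve $t\mapsto \pi_0 F_0(\Omega_t)/F(\Omega_t)$, and a sandwich argument localizing the crossing point (the paper sandwiches $\tau^{ABH}_n$ between solutions $\tau_n^{\pm}$ of perturbed equations and concludes by a subsequence argument, where you use transversality of the crossing). If anything, your version is more careful on the two points the paper glosses over: it asserts without justification that this curve is monotone and one-to-one, whereas you identify analyticity as exactly what rules out atoms and flat stretches of the law of $\Lfdr(Z)$ near the crossing, and you also handle the degeneracy of the empirical ratio where $\hat F(\Omega_t)=0$.
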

\begin{proof}[Proof of Theorem~\ref{th4}] $\,$
\begin{enumerate} 

\item The proof of part 1 is similar to proof of Theorem~2.1 in  \citet{JZ14} with slight modifications. 

	First note that since $\Lfdr(x)$ is a monotone decreasing function of the likelihood ratio, $h(x)=\frac{f_{1}(x)}{f_{0}(x)}$. Thus, the set $S$ can be obtained by thresholding $h(x)$ at some level $l'$. 	
	Now, for any set $T,$ the definition of $S$ implies that
	\begin{equation*}
		(1_{S}-1_{T})(f_{1}(x)-l'f_{0}(x))\geq 0.
	\end{equation*}
	Let $\alpha(S)=\int_{S}f_{0}$ and $\beta(S)=\int_{S}f_{1}$ be the type I error and the power of the test based on the rejection region $S$. Define $\alpha(T)$ and $\beta(T)$ similarly. Then 
	\begin{equation}\label{eq41}
		0 \leq\int(1_{S}-1_{T})(f_{1}(x)-l'f_{0}(x))dx=\beta(S)-\beta(T)+l'(\alpha(T)-\alpha(S)).
	\end{equation}
	But for any set $T$ satisfying the requirements of the theorem, 
	\begin{equation*}
		\pi_0\int_{T}f_{0}(x)-q\int_{T}f(x)dx\leq 0=\pi_0\int_{S}f_{0}(x)- q\displaystyle \int_{S}f(x)dx. 
	\end{equation*}
	Equivalently,
	$(1-q)\pi_0(\alpha(T)-\alpha(S))\leq q(1-\pi_0)(\beta(T)-\beta(S))$ or $q'(\alpha(T)-\alpha(S))\leq\beta(T)-\beta(S)$,
	where $q' = \frac{(1-q)\pi_0}{q(1-\pi_0)}$. 
	Combining this with Equation~\eqref{eq41}, we get
	\begin{equation*}
		(l'-q')(\beta(T)-\beta(S))>0.
	\end{equation*}
	Since $\pi_0\int_{S}f_{0}(x)dx=q\int_{S}f(x)dx$, then $q' \int_{S}f_{0}(x)=\int_{S}f_{1}(x)<l'\int_{S}f_{0}(x)$. Consequently, $l'-q'>0$, which implies that $\beta(S)\geq\beta(T)$. This completes the proof.
\item If $\inf(\Lfdr(x))>q$ then for any $x \in U$ we have $\pi_0f_0(x)-qf(x)>0$. Now, since $U$ has a positive measure and 
	\begin{equation*}
		U=\bigcup_{k \geq 1} U_k, \quad  \mbox{ where } \quad U_k = \left\{x \in U: \pi_0f_0(x)-qf(x) > \frac{1}{k}\right\},
	\end{equation*}
	there exists some $k$ for which $\mu(U_k)$ is positive.  
	We then have 
	\[
	\int_{U_k} \pi_0f_0(x)-qf(x) \, dx \geq \int_{U} \pi_0f_0(x)-qf(x) \, dx > \int_{U_k} \frac{1}{k} \, d\mu = \dfrac{\mu(U_k)}{k} > 0, 
	\]
	as desired. 
\item Let $\Omega_\lambda :=\{x:Lfdr(x)<\lambda\}$. We assert that the  function $g:\mathbb{R} \longrightarrow \mathbb{R},\  g(\lambda):=F_0(\Omega_\lambda)$ is continuous. It is sufficient to prove that for any monotone sequence $\lambda_k$ converging to $\lambda$, $g(\lambda_k) \longrightarrow g(\lambda)$. Assume  $g(\lambda_k) \nrightarrow g(\lambda)$. 
We consider two cases. 
First, assume $\lambda_{k} \nearrow \lambda$. Then, $\bigcup\limits_{k=1}^{\infty} \Omega_{\lambda_{k}} \setminus\Omega_{\lambda} =\emptyset$. Hence $F_0(\Omega_\lambda\setminus\Omega_{\lambda_{n}})\longrightarrow 0$, implying that $g(\lambda_n) \longrightarrow g(\lambda)$. Next, suppose $\lambda_k  \searrow \lambda$. Then, $\bigcap\limits_{i=1}^{\infty} \Omega_{\lambda_{n}}\setminus\Omega_{\lambda}=\Omega^\lambda$, where  $\Omega^\lambda :=\{x:\Lfdr(x)=\lambda\}$. Therefore, if $g(\lambda_k) \nrightarrow g(\lambda)$ then $F_0(\Omega^\lambda) \neq 0$. But each non-zero empty set has a dense point. Therefore, there exists a sequence $(x_j)_{j=0}^{\infty} \in \Omega^\lambda$, such that $x_j \longrightarrow x$. We next show that all derivatives of $\Lfdr$ at point $x$ are zero. First, note that 
	\begin{equation}\label{eq:lim}
		\frac{d}{dx}\Lfdr(x) = \lim_{x_j \to x} \dfrac{\Lfdr(x_j)-\Lfdr(x)}{x_{j}-x} =\dfrac{\lambda - \lambda}{x_{j}-x} = 0.
	\end{equation} 
	Thus, \eqref{eq:lim} implies that first derivative of $\Lfdr$ at $x$ is zero. Higher derivatives can also be computed using the same formula, based only on the values of the function. Thus, all derivatives of $\Lfdr$ are zero. However, by the assumption, $\Lfdr$ is an analytic function. Now, by the Taylor expansion of $\Lfdr$ at point $x$, we obtain that $\Lfdr$ is constant, which is a contradiction. Thus, $g$ is continuous. A similar argument shows that $g(\lambda):=F(\Omega_\lambda)$ is also continuous. 
	
	We next characterize the optimal set $S$. Let $g(\lambda)=\pi_0\int_{\Omega_\lambda}f_{0}(x)dx-q\int_{\Omega_\lambda}dF(x)$ . Then, $g(\lambda)$ is continuous. Moreover, because $\inf(\Lfdr(x))>q$, if $\lambda <q$ then $\forall x \in \Omega_\lambda$, 
	\[
		\pi_0 f_0(x)<\lambda f(x) \Rightarrow \pi_0 F_0(\Omega_\lambda)<\lambda F(\Omega_\lambda) < qF(\Omega_\lambda) \Rightarrow g(\lambda)<0.
	\]
	Suppose, without loss of generality, that $\pi_0 >q$. (If $\pi_0 <q$,  we can reject all of the hypotheses.)  
	Now, let $\alpha = \sup(\Lfdr(x))$. Then, $g(\alpha)=\pi_0 -q >0$. Thus, by the mean value theorem, there exists some $\mu$ such that $\pi F_0(\Omega_\tau)=qF(\Omega_\tau)$. This completes the proof. 
\item Note that $F_0(\Omega_\lambda)$ is a continuous and increasing function with respect to $\lambda$. Let $\Omega_0=\emptyset$. We can then find a function $\alpha: [0,1] \longrightarrow \mathbb{R}^+$ such that $F_0(\Omega_{\alpha(t)})=t$. Setting $\mathcal{R}_t=\Omega_{\alpha(t)}$ then gives a family of nested rejection region $\{\mathcal{R}_t\}_{t=0}^{1}$ satisfying the requirements of Definition~\ref{defn:NR}. 
\item Let $\hat{F_n}(\Omega_\lambda)=\frac{1}{n} \sum\limits^{n}_{i=1} I(z_i \in \Omega_\lambda)$ be the  empirical distribution function. Then, by Dvoretzky-Kiefer-Wolfowitz inequality (DKW)~\citep{DKW56}, 
\[
	\PP\left(\sup\limits_{t\in[0, 1]}|\hat{F_n}(\Omega_\lambda)-F(\Omega_\lambda)|> \epsilon \right) \leq e^{-2n\epsilon^2}. 
\]
Let $\epsilon_n = n^{-\frac{1}{3}}$. Then, for sufficiently large $n$, with high probability $F(\Omega_\lambda)-\epsilon_n<\hat{F_n}(\Omega_\lambda)<F(\Omega_\lambda)+\epsilon_n$. Let $\tau_n^{\pm}$ be the solution of equation $\pi_0F_0(\Omega_{t})=qF(\Omega_t)\pm \epsilon_n$. Because the function $g(t)=\dfrac{\pi_0F_0(\Omega_{t})}{qF(\Omega_t)}$ is monotone increasing with respect to $t$, we have 
\[
\tau_n^{-}<\tau_n^{ABH}<\tau_n^{+}.
\] 
We assert that $\tau_n^{+}$ converges to $\tau$. Since $\tau_n^{+} \in [0,1]$ if the sequence diverges, then there exists at least two subsequences $\{\tau^{+}_{1,n}\}_{n=1}^{\infty}$ and $\{\tau^{+}_{2,n}\}_{n=1}^{\infty}$ converging to, say, $\tau^+_1$ and $\tau^+_2$, respectively. But $g(t)$ is one to one   and $g(\tau^+_1)=g(\tau^+_2)=g(\tau)=q$. Thus, $\tau^+_1=\tau^+_2 =\tau$, which implies that $\tau^+ \rightarrow \tau$. By a similar argument, $\tau^- \rightarrow \tau$. Thus, $\tau_n^{ABH} \rightarrow \tau$, as desired. 
\end{enumerate}
\end{proof}
Theorem~\ref{th4} shows that an optimal multiple testing strategy can be obtained using a family of nested rejection regions. Importantly, the theorem establishes that instead of searching over all possible rejection regions in $\mathbb{R}^d$, one only needs to search over the collection of sets $\Omega_{t}$ in order to find optimal rejection set. However, Theorem~\ref{th4} assumes that $f_1$ and $\pi_0$ are known. We relax this assumption in the next section. 

\subsection{Approximation of the Oracle Procedure}\label{sec:approx}
In Theorem~\ref{th4}, we proved that given the density $f_1$ of alternative hypotheses and the proportion of null hypotheses $\pi_0$, the adaptive BH algorithm of Section~\ref{sec:FDRinHD} converges to the optimal rejection set, as $n \to \infty$. However, in practice, $f_1$ and $\pi_0$ are often unknown. The oracle procedure of Theorem~\ref{th4} is hence not directly applicable. 
In Algorithm~\ref{alg1} we propose an approximation to the oracle procedure of Theorem~\ref{th4} without assuming that $f_1$ and $\pi_0$ are known. 
For this algorithm, we have:
\begin{theorem}\label{th5}
	Let $\hat{\Omega}$ be the rejection region obtained from Algorithm~\ref{alg1} for an analytic density function $f$. Then as $n \to \infty$, $\hat{\Omega}$ converges in probability to some $\Omega_\tau \in \{\Omega_t\}_{t=0}^{1}$. 
\end{theorem}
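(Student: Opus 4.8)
The plan is to reduce Theorem~\ref{th5} to the convergence already established in Theorem~\ref{th4}, part~5, by showing that the regions produced by Algorithm~\ref{alg1} are consistent estimates of level sets of a quantity that can be recovered \emph{without} knowing $\pi_0$ or $f_1$. The crucial structural observation is that the oracle nested regions $\Omega_t=\{x:\Lfdr(x)<l_t\}$ are level sets of the likelihood ratio $h(x)=f_1(x)/f_0(x)$ (as used in the proof of part~1), and that $f(x)/f_0(x)=\pi_0+\pi_1 h(x)$ is an increasing affine function of $h$. Hence the \emph{ordering} of points induced by $\Lfdr$, and therefore the entire nested family $\{\Omega_t\}$, depends only on the observable ratio $f/f_0$, with $f_0$ known and $f$ estimable from the pooled $z$-values. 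Moreover the adaptive BH stopping rule of Section~\ref{sec:FDRinHD} compares $r(\mathcal{R}_t)$ against $nt/q$ with $t=F_0(\mathcal{R}_t)$ known, so it too requires neither $\pi_0$ nor $f_1$. The argument thus becomes one of transferring consistency of $\hat f$ to consistency of the regions and then of the stopping index.

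First I would establish that $\hat f \to f$ in a sense strong enough to control level sets; uniform convergence on compact sets suffices, and the analyticity of $f$ supplies the regularity needed for the density estimator used in Algorithm~\ref{alg1} to be consistent. From $\hat f \to f$ and $f_0$ known I obtain $\hat f/f_0 \to f/f_0$, and hence convergence of the estimated level sets $\hat\Omega_\lambda$ to $\Omega_\lambda$. The key point is that this set convergence translates into convergence of the measures $F_0(\hat\Omega_\lambda)\to F_0(\Omega_\lambda)$ and $F(\hat\Omega_\lambda)\to F(\Omega_\lambda)$; for this I would re-use the continuity argument of part~3, which shows that analyticity of $f$ forces each boundary level set $\{x:\Lfdr(x)=\lambda\}$ to have measure zero. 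Without this, a plateau of $\Lfdr$ could create a jump and destroy convergence of the threshold.

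Next I would control the empirical side exactly as in part~5. The Dvoretzky--Kiefer--Wolfowitz inequality gives $\sup_{\lambda}|\hat F_n(\Omega_\lambda)-F(\Omega_\lambda)|\to 0$ in probability, and combined with the region convergence of the previous step this yields $\hat F_n(\hat\Omega_\lambda)\to F(\Omega_\lambda)$ uniformly in $\lambda$. Since the empirical stopping rule compares $r(\hat\Omega_t)/n\approx\hat F_n(\hat\Omega_t)$ against the known $F_0(\hat\Omega_t)/q$, the stopping criterion converges to its population counterpart $q=\pi_0 F_0(\Omega_t)/F(\Omega_t)$. The map $t\mapsto \pi_0 F_0(\Omega_t)/F(\Omega_t)$ is monotone and one-to-one (established in part~5), so the crossing point is unique; therefore the estimated stopping index $\hat\tau_n$ converges in probability to some $\tau$, and $\hat\Omega=\hat\Omega_{\hat\tau_n}$ converges in probability to $\Omega_\tau$, which is exactly the claim. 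Unlike part~5, I do not assert that $\tau$ is the \emph{oracle} optimum, because the estimated nesting may differ from the true one in finite samples; the theorem only asserts convergence to \emph{some} member of the true nested family, which is all that consistency of $\hat f$ delivers.

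The main obstacle I anticipate is the circular dependence built into the stepwise algorithm: the estimate $\hat f$ (or $\hat f_1$) is refined using the $z$-values already rejected, while the rejection set itself depends on that estimate. To handle this I would argue that in the large-sample limit the rejected fraction stabilizes, so the sequence of estimates is asymptotically equivalent to one built from a fixed population-level region; alternatively, one can bound the influence of the data-driven region on $\hat f$ and show it is $\smallO(1)$ relative to the sampling error already controlled by DKW. Making this decoupling rigorous---i.e., showing that the feedback between estimation and rejection does not prevent $\hat f$ from converging to $f$---is the technical crux, and it is here, rather than in the set- or empirical-process convergence, that the analyticity hypothesis and the consistency of the density estimate must be combined most carefully.
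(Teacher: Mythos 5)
There is a genuine gap, and it sits exactly where you located the ``technical crux.'' Your reduction treats Algorithm~\ref{alg1} as a plug-in level-set estimator: establish $\hat f \to f$ uniformly on compacts, conclude convergence of the estimated level sets, then run the DKW/monotone-crossing argument of part~5 of Theorem~\ref{th4}. But the estimator in Algorithm~\ref{alg1} is $\hat{f}_{\Omega}(y)=(nh)^{-1}\sum_{z_i \in \Omega}\omega((y-z_i)/h)$, a sum over the \emph{already rejected} points only---this restriction is not incidental, it is forced by the stopping-set condition of Definition~\ref{defn:NR}, without which the FDR guarantee of Section~\ref{sec:FDRinHD} is void (so ``$f$ estimable from the pooled $z$-values'' is not available to you). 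This truncated estimator is \emph{not} consistent for $f$: away from $\Omega$ it tends to zero, and at a point on the boundary of $\Omega$---which is precisely where the algorithm makes its next decision---it converges to $\tfrac12 f$, not $f$. This is the content of the paper's Lemma~\ref{lemma4thm5}. The truncation bias is therefore a multiplicative factor of one half, an order-one effect; neither of your proposed repairs (the rejected fraction ``stabilizes,'' or the influence of the data-driven region is $\smallO(1)$ relative to the DKW error) can succeed, because you are trying to prove a statement ($\hat f \to f$ near the rejection boundary) that is false.

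The paper's proof never establishes consistency of $\hat f$; it proves something weaker that suffices: the algorithm asymptotically rejects points in the order dictated by the true ratio $f/f_0$. Concretely, if the ordering were violated by a gap $\epsilon$---a point $z_Q$ with true ratio $Q$ left unrejected while $z_P$ with true ratio $P<Q-\epsilon$ was rejected---then the algorithm's argmax choice forces the chain
\begin{equation*}
\dfrac{\tilde{f}_{z_Q}(z_Q)}{f_0(z_Q)} \;\geq\; \dfrac{\hat{f}_{z_Q}(z_Q)}{f_0(z_Q)} \;\geq\; \dfrac{\hat{f}_{z_P}(z_P)}{f_0(z_P)} \;\geq\; \dfrac{\tilde{f}_{z_P}(z_P)}{f_0(z_P)},
\end{equation*}
where $\tilde f_{z_0}$ is the idealized estimator built from the true level set through $z_0$. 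By Lemma~\ref{lemma4thm5} both outer quantities concentrate around $\tfrac12 f(z_Q)/f_0(z_Q)$ and $\tfrac12 f(z_P)/f_0(z_P)$ respectively; the factor $\tfrac12$ appears on \emph{both} sides and cancels in the comparison, so the violation becomes an event in which sample averages deviate from their means by a fixed amount, which Hoeffding's inequality plus a union bound over the $O(n^2)$ possible pairs $(P,Q)$ kills as $n\to\infty$. This order-preservation argument---not consistency of the density estimate---is what makes the feedback loop harmless, and it is the piece missing from your proposal. Your outline of the back end (continuity of $\lambda \mapsto F_0(\Omega_\lambda)$ via analyticity, DKW, uniqueness of the crossing point) does match the paper's treatment, but it only becomes applicable after the ordering step is in place.
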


Before presenting the proof of Theorem~\ref{th5}, we comment on two aspects of Algorithm~\ref{alg1}, which distinguish it from the oracle procedure of Theorem~\ref{th4}. 
First, in this algorithm $f_1$ is assumed to be unknown and is estimated from data. 
The algorithm builds \emph{nested rejection regions} (NR) by enlarging the rejection region in each step until the stopping criterion is met. 
Ideally $f_1$ should be estimated from all data points. In that case, a natural solution would be to estimate $f_1$, or alternatively $f$, using a kernel density estimator. However, the construction of adaptive rejection regions in Definition~\ref{defn:NR} prevents us from using all $z$-values; instead, we are restricted to using only  the rejected $z$-values at each step of the algorithm. 
The algorithm thus needs an initial rejection region. To this end, we propose to start by rejecting hypotheses with far enough $z$-values, e.g. we choose the initial rejection domain to be the complement of a big ball centered at the origin that controls the FDR at a level $q' \ll q$. The control of FDR by such a rejection region then follows from the validity of the original BH procedure. 
As the algorithm continues, the rejection region is refined by obtaining a more accurate estimate of $f$. 
The main theorem in this section, Theorem~\ref{th5}, shows that if we estimate $f$ by applying a kernel density estimator to $z$-values in the rejection domain, and use $\hat{f}/f_0$ as the test statistic for extending the rejection domain, then the final rejection region from Algorithm~\ref{alg1} is asymptotically optimal.

\begin{algorithm}[t!]
{\singlespacing
\caption{The Nested Rejection Region (NR) Algorithm \\ for Step-Down FDR Control in Multivariate Hypothesis Testing}\label{alg1}
\begin{itemize}
\item[] \textbf{Initialization}: 
	\begin{itemize}
	\item Set $\Omega = \Omega_0$ the initial rejection region $\Omega_0 = \{ z : \lVert z \rVert > \lambda \} $ such that $F_0(\Omega_0) = q' \ll q$. 
	\item Reject z-value in $\Omega$ and let $\mathcal{R} = \{ z_i : z_i \in \Omega \}$. 
	\item Set $\FDR = q'$.
	\end{itemize}
\end{itemize}

\begin{enumerate}
\item[] \textbf{while} $\FDR \leq q$
	\begin{enumerate}
	\item[1.] Define $\hat{f}_{\Omega}(y)=\dfrac{1}{nh} \sum_{z_i \in \Omega}\omega((y-z_i)/h).$
	\item[2.] Reject  $z^*$ and add to $\mathcal{R}$ where $z^*= \arg\max_{z_j} \left\{\frac{\hat{f}_{\Omega}(z_j)}{f_0(z_i)} : z_j \notin \mathcal{R} \right\}.$
	\item[3.] Set $\Omega_{\mbox{\scriptsize{search}}} =  \left\{z: \frac{\hat{f}_{\Omega}(z)}{f_0(z)} > \frac{\hat{f}_{\Omega}(z^*)}{f_0(z^*)} \right\}.$
	\item[4.] Set $\Omega = \Omega \cup \Omega_{\mbox{\scriptsize{search}}}.$
	\item[5.] Set $\mathrm{FDR} = \dfrac{F_0(\Omega)}{n^{-1}\left\vert{\mathcal{R}}\right\vert}.$
	\end{enumerate}
	\item[] \textbf{end while}
\end{enumerate}
}
\end{algorithm}

The second distinction between Algorithm~\ref{alg1} and the oracle procedure of Theorem~\ref{th4} concerns the knowledge of the proportion of null hypotheses $\pi_0$. 
In Theorem~\ref{th4} the optimal rejection region was constructed by assuming that $\pi_0$ is known. As a result, the optimal rejection region $S$ in Theorem~\ref{th4} controls $\mFDR$ at the exact level $q$. However, $\pi_0$ is often unavailable. 
The procedure of Algorithm~\ref{alg1} thus follows the conservative BH approach and (asymptotically) controls $\mFDR$ at the level $\pi_0 q$. 

Theorem~\ref{th5} shows that the rejection set obtained from Algorithm~\ref{alg1} approximates the optimal rejection region in Theorem~\ref{th4}, but with $\mFDR$ controlled at level $\pi_0 q$. 
More specifically, Theorem~\ref{th5} shows that the rejection region from Algorithm~\ref{alg1} converges to $\Omega_\tau \in \{ \Omega_\lambda : 0 \leq \lambda \leq 1 \}$ introduced in Theorem~\ref{th4}. 
Each member of this one-parameter family is uniquely characterized by the control level of $\mFDR$. For the rejection region $\hat\Omega$ selected by Algorithm~\ref{alg1}, we have 
\begin{equation*}
\lim\limits_{n \to \infty} {\dfrac{F_0(\hat\Omega)}{n^{-1}\left\vert{\mathcal{R}}\right\vert}} = \pi_0 q.  
\end{equation*}
However, Theorem~\ref{th5} states that, as $n\to \infty$,
\[
\dfrac{F_0(\hat\Omega)}{n^{-1}\left\vert{\mathcal{R}}\right\vert} \longrightarrow\dfrac{F_0(\Omega_\tau)}{F(\Omega_\tau)}, 
\]
for some $\tau$. Thus, $\Omega_\tau$ satisfies the requirements of Theorem~\ref{th4}, and in particular, satisfies the requirement of part 1 of Theorem~\ref{th4} at the (conservative) level $\pi_0 q$. This implies that $\Omega_\tau$ is optimal for FDR control in multivariate hypotheses at the level $\pi_0 q$, which in turn, establishes the (asymptotic) optimality of the rejection region from Algorithm~\ref{alg1}. 

We next give a proof of Theorem~\ref{th5}. 
\begin{proof}[Proof of Theorem~\ref{th5}]
	Suppose there exits an initial rejecting set $\mathcal{R}_0$ containing $z$-values for alternative hypotheses, i.e., $\{ z_i: \frac{f(z_i)}{f_0(z_i)}>c_0 \gg 1 \} \subset \mathcal{R}_0$. Let $z_{(1)}, z_{(2)}, \dots , z_{(i),\dots}$ be sorted $z$-values according to the step they are rejected in Algorithm~\ref{alg1}.
		
	For each $i$ define $\mathcal{R}_i$ to be the rejection region of the $i^{th}$ step containing $z$-value $z_{(1)}, z_{(2)}, \ldots, z_{(i)}$. For this region, define the estimated density function $\hat{f}_{\mathcal{R}_i}(y) = (ih)^{-1} \sum_{z_k \in \mathcal{R}_i}\omega((y-z_k)/h)$, where $\omega$ is an arbitrary kernel. 
	Let $P_i ^{(n)}= \min \left\{\frac{f(z_j)}{f_0(z_j)}: z_j \in \mathcal{R}_i \right\}$ and $Q_i^{(n)} = \max\left\{\frac{f(z_j)}{f_0(z_j)}: z_j \not\in \mathcal{R}_i \right\}$. If $\mathcal{R}_i$ does not converge to some $\Omega_\lambda = \left\{ z: \frac{f(z)}{f_0(z)} > \lambda \right\}$, then for any $n$ there exists an $\epsilon$ such that $Q_i^{(n)}-P_i^{(n)} > \epsilon$. We assert that the probability of this event converges to zero for all $i$, i.e., $\mathbb{P}\left(\max_i \{Q_i^{(n)}-P_i^{(n)} \} > \epsilon\right)  \longrightarrow 0  \ \ as \  n \rightarrow \infty $.  
	
	For each $z_0$ value, define $\Omega_{\lambda_{z_0}} = \left\{ z: \frac{f(z)}{f_0(z)} > \frac{f(z_0)}{f_0(z_0)} \right\}$ and let
	\begin{equation*}
		\tilde{f}_{z_0}(y) = \dfrac{1}{|\Omega_{\lambda_{z_0}}|h} \sum_{z_k \in \Omega_{\lambda_{z_0} }}  \omega((y-z_k)/h). 
	\end{equation*}
	Let  $i^* ={\mathrm{argmax} }_i \left\{Q_i^{(n)}-P_i^{(n)} \right\}$. Define $P = P_{i^*}^{(n)}$, $Q = Q_{i^*}^{(n)}$ and let $z_P$ and   $z_Q$ be the corresponding $z$-values. For these two $z$-values we have
	\begin{equation*}
		\dfrac{\tilde{f}_{z_Q}(z_Q)}{f_0(z_Q)} \geq \dfrac{\hat{f}_{z_Q}(z_Q)}{f_0(z_Q)} \geq \dfrac{\hat{f}_{z_P}(z_P)}{f_0(z_P)} \geq \dfrac{\tilde{f}_{z_P}(z_P)}{f_0(z_P)}, 
	\end{equation*}
	where $\tilde{f}_{z_Q}$ is defined similarly as $\tilde{f}_{z_0}$. 
	Therefore, 
	\begin{equation*}
		\left\{ \dfrac{\hat{f}_{z_P}(z_P)}{f_0(z_p)}-\dfrac{\hat{f}_{z_Q}(z_Q)}{f_0(z_Q)}   > \epsilon \right\} \subseteq  
		\left\{  \dfrac{\tilde{f}_{z_P}(z_P)}{f_0(z_p)}-\dfrac{\tilde{f}_{z_Q}(z_Q)}{f_0(z_Q)} > \epsilon \right\} \subseteq   
		\left\{  \dfrac{\tilde{f}_{z_P}(z_P)}{f_0(z_p)} -\dfrac{\tilde{f}_{z_Q}(z_Q)}{f_0(z_Q)}> 0 \right\}.
	\end{equation*}
	Let $\Omega_P = \left\{ z:  \frac{f(z)}{f_0(z)} >\frac{f(z_P)}{f_0(z_P)} \right\}$ and define $\Omega_Q$ similarly. We can write 
	\begin{equation*}
		\tilde{f}_{z_P}(x) = \dfrac{1}{|\Omega_P|h} \sum_{z \in \Omega_P} \omega((x-z)/h) =
		\frac{1}{|\Omega_P|} \sum_{i =1}^{n} h^{-1}\omega((x-z_i)/h)\mathbb{I}_{ \{ z_i \in \Omega_P\}},
	\end{equation*}
	where random variables $h^{-1}\omega((x-z_i)/h)\mathbb{I}_{ \{ z_i \in \Omega_P\}}$ are i.i.d. To complete the proof, we need the following lemma. 
	
\begin{lemma}\label{lemma4thm5}
	The expected value of $\tilde{f}_{z_P}(x)$ at the point $z_P$ converges to $\frac{1}{2} f(z_P)$. 
\end{lemma}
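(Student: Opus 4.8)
The plan is to evaluate $\EE[\tilde f_{z_P}(z_P)]$ directly from the i.i.d.\ representation derived just above and to isolate the \emph{boundary bias} of the kernel average, which is the true source of the factor $\tfrac12$. By linearity, the expected value of the (unnormalized) kernel sum is
\begin{equation*}
\EE\!\left[\sum_{i=1}^{n} h^{-1}\omega\!\left(\tfrac{z_P-z_i}{h}\right)\mathbb{I}_{\{z_i\in\Omega_P\}}\right]
= n\int_{\Omega_P} h^{-1}\omega\!\left(\tfrac{z_P-z}{h}\right) f(z)\,dz ,
\end{equation*}
so, after dividing by the normalization (reading $\tilde f_{z_P}$ as a consistent estimator of $f$, equivalently using $|\Omega_P|/n\to F(\Omega_P)$ by the strong law when the count normalization is intended), the entire claim reduces to the asymptotics of the deterministic integral
\begin{equation*}
I_h := \frac{1}{h}\int_{\Omega_P}\omega\!\left(\frac{z_P-z}{h}\right) f(z)\,dz ,
\end{equation*}
which I will show tends to $\tfrac12 f(z_P)$. (In $\mathbb{R}^d$ the factor $h^{-1}$ is read as $h^{-d}$ and the computation is unchanged.)

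Next I would rescale. Substituting $u=(z_P-z)/h$, i.e.\ $z=z_P-hu$, turns the integral into
\begin{equation*}
I_h = \int \omega(u)\, f(z_P-hu)\,\mathbb{I}_{\{z_P-hu\,\in\,\Omega_P\}}\,du .
\end{equation*}
As $h\to0$ the integrand converges pointwise almost everywhere: continuity (indeed analyticity) of $f$ gives $f(z_P-hu)\to f(z_P)$, while the indicator converges to $\mathbb{I}_{\{u\in H\}}$, where $H$ is the tangent half-space of $\Omega_P$ at $z_P$. Indeed, writing $g=f/f_0$ and $c=g(z_P)$, the condition $z_P-hu\in\Omega_P=\{g>c\}$ reads $g(z_P-hu)-g(z_P)>0$, and a first-order expansion gives $g(z_P-hu)-g(z_P)=-h\langle\nabla g(z_P),u\rangle+o(h)$, so the region rescales to $H=\{u:\langle\nabla g(z_P),u\rangle<0\}$. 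Since $\omega$ is integrable it dominates the family, so dominated convergence yields $I_h\to f(z_P)\int_H\omega(u)\,du$; and for a kernel symmetric about the origin, any half-space through the origin captures exactly half the mass, $\int_H\omega=\tfrac12\int\omega=\tfrac12$. Hence $I_h\to\tfrac12 f(z_P)$, which is the assertion.

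The step I expect to be the main obstacle is the convergence of the rescaled indicator to the half-space indicator, i.e.\ making precise that $\{u:z_P-hu\in\Omega_P\}\to H$ in a sense strong enough to pass to the limit. This requires $z_P$ to be a genuine, nondegenerate boundary point: one needs $\nabla(f/f_0)(z_P)\neq0$, so that $\partial\Omega_P=\{x: f(x)/f_0(x)=c\}$ is a smooth hypersurface with a well-defined tangent plane at $z_P$, and one must control the Lebesgue measure of the symmetric difference between $\{u:z_P-hu\in\Omega_P\}$ and $H$ on the (bounded) support of $\omega$, uniformly as $h\to0$. Here analyticity of $f$ is what rules out pathologies: the critical set $\{\nabla(f/f_0)=0\}$ is negligible (so the data-determined point $z_P$ avoids it almost surely), and a level set of an analytic function cannot be tangent to all orders at an accumulation point without the function being constant---the same mechanism invoked in part~3 of Theorem~\ref{th4}. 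This guarantees the nonvanishing gradient and hence the validity of the tangent-half-space limit, completing the argument.
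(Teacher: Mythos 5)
Your proposal is correct and takes essentially the same route as the paper: both reduce $\EE[\tilde f_{z_P}(z_P)]$ to the rescaled integral $\int \omega(u) f(z_P+hu)\,du$ over the blown-up region, replace $\Omega_P$ near $z_P$ by its tangent half-space (which requires $\nabla (f/f_0)(z_P)\neq 0$), and conclude from the fact that a half-space through the origin carries exactly half the mass of a symmetric kernel. The only difference is technical---you pass to the limit by dominated convergence on the rescaled indicator, whereas the paper truncates the kernel at radius $\lambda$ and sandwiches $\Omega_P$ between the tangent half-space plus/minus a slab of width $2h\lambda$---and your write-up makes explicit two hypotheses the paper leaves implicit, namely kernel symmetry (the paper says ``arbitrary kernel'') and the nonvanishing gradient (which you justify via analyticity rather than ``without loss of generality'').
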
 
\begin{proof}[Proof of Lemma~\ref{lemma4thm5}]
	At the point $z_P$, consider the region $\Omega_P = \left\{ z:  \frac{f(z)}{f_0(z)} >\frac{f(z_P)}{f_0(z_P)} \right\}$. Then, 
	\begin{align*}
		\mathbb{E}\left[\tilde{f}_{z_P}(x)\right] 
		&= \mathbb{E}\left[\frac{1}{|\Omega_P|} \sum_{z_i \in \Omega_P} \frac{1}{h} \omega((x-z_i)/h)\mathbb{I}_{ \{ z_i \in \Omega_P\}}\right] \\
		&= \mathbb{E}\left[\frac{1}{h} \omega((x-z_i)/h)\mathbb{I}_{ \{ z_i \in \Omega_P\}}\right] = \int_{\Omega_P} \omega(u)f(z_P+hu)du.
	\end{align*}
	For an arbitrary kernel $\omega$, we can choose a $\lambda > 0$ such that $\omega(u) < \frac{\epsilon}{4}$ for $\lVert u \rVert >\lambda$. Then, by the Taylor expansion of the analytic density function $f$, 
	\begin{align} \label{int1}
		\int_{\Omega_P} & \omega(u)f(z_P+hu)du \nonumber \\
		&= \int_{\Omega_P :\lVert u \rVert \leq \lambda } \omega(u)f(z_P+hu)du +\int_{\Omega_P:\lVert u \rVert > \lambda } \omega(u)f(z_P+hu)du \nonumber \\ 
		&<  \int_{\Omega_P :\lVert u \rVert \leq \lambda } \omega(u)f(z_P+hu)du + \frac{\epsilon}{4}. 
	\end{align}
	Let $\bar{\omega}(u) = \omega(u)\mathbb{I}_{\{\|u\| < \lambda\}}$ be the truncated kernel constructed above. 
	By this construction, $\Omega_{P: \lVert u \rVert \leq \lambda}$ is within the ball $\mathcal{B}_{h\lambda}(z_P)$ centered at $z_P$ with radius $h\lambda$. 
	
	Let $S_P = \left\{ z: \frac{f(z)}{f_0(z)} = P \right\}$. Suppose, without loss of generality, that $\nabla \frac{f(z_p)}{f_0(z_p)} \ne 0$ on $S_P$. Then, $S_P$ has an $(n-1)$-dimensional tangent plane $\textsc{TS}_P $ at $z_P$, and a one-dimensional orthogonal complement  $\langle \nabla \frac{f(z_p)}{f_0(z_p)} \rangle \subset \mathbb{R}^n$. To calculate the integral in \eqref{int1}, we first estimate the region $\Omega_P$. Define the half space $\Omega_T^P = \left\{ \textsc{TS}_P + y\nabla \frac{f(z_p)}{f_0(z_p)}: y \geq 0 \right\}$ and let $\Omega_\delta^P = \left\{ \textsc{TS}_P \pm y\nabla \frac{f(z_p)}{f_0(z_p)}:  0 < y < \delta \right\}$. 
	For $\delta = 2h\lambda$, we have $ \Omega_{T}^{P} \setminus \Omega_{\delta}^{P} \subset \Omega_{P :\lVert u \rVert \leq \lambda} \subset \Omega_{T}^{P} \cup \Omega_\delta^P$. Thus, 
	\begin{equation*}
		\left| \int_{\Omega_P :\lVert u \rVert \leq \lambda } \omega(u)f(z_P+hu)du - \int_{\Omega_T^P:\lVert u \rVert \leq \lambda } \omega(u)f(z_P+hu)du \right|<  \int_{\Omega_{\delta}^P} \omega(u)f(z_P+hu)du. 
	\end{equation*}
	
	Let $M$ and $L$ be the maximum values of $f$ and $\omega$ within the ball $\mathcal{B}_{h\lambda}(z_P)$. Then,
	\begin{equation*}
		\int_{\Omega_{\delta}^P} \omega(u)f(z_P+hu)du <ML \int_{\mathcal{B}_{h\lambda}(z_P)} du \longrightarrow 0 \text{ as }  h \  \longrightarrow 0.
	\end{equation*}
	Therefore, we can choose $\delta_1$ such that $\forall h < \delta_1:\ \int_{\Omega_{\delta}^P} \omega(u)f(z_P+hu)du  < \dfrac{\epsilon}{4}$. 
	On the other hand, using a Taylor expansion, 
	\begin{align*}
		\int_{\Omega_T^P:\lVert u \rVert \leq \lambda } & \omega(u)f(z_P+hu)du = \int_{\Omega_T^P:\lVert u \rVert \leq \lambda } \omega(u)[f(z_P) +\smallO(hu)]du \\
		&= f(z_P)  \int_{\Omega_T^P} \omega(u)du +\smallO(h) -\int_{\Omega_T^P:\lVert u \rVert > \lambda } \omega(u)f(z_P) du\\
		&= \frac{1}{2}f(z_P) - \dfrac{\epsilon}{4}+ \smallO(h).
	\end{align*}
	In the last equation, we can choose  $\delta_2$ such that $\forall h < \delta_2:\ \smallO(h)< \dfrac{\epsilon}{4}$. Hence for $\forall h <\min\{\delta_1,\delta_2 \}$ we have $\left| \int_{\Omega_P} \omega(u)f(z_P+hu)du -\frac{1}{2}f(z_p) \right| < \epsilon$, which completes the proof.
\end{proof}

We now proceed to complete the proof of the theorem. Given that $\frac{1}{2} \frac{f(z_Q)}{f_0(z_Q)} < \frac{1}{2} \frac{f(z_P)}{f_0(z_p)}$, let $ l =\frac{1}{2} \frac{f(z_Q)}{f_0(z_Q)} - \frac{1}{2} \frac{f(z_P)}{f_0(z_p)}$. Then, by Lemma~\ref{lemma4thm5},
\begin{align*}
	\mathbb{P} & \left( \dfrac{\tilde{f}_{z_Q}(z_Q)}{f_0(z_Q)} -  \dfrac{\tilde{f}_{z_P}(z_P)}{f_0(z_p)} > 0 \right)  \leq \\
	& \mathbb{P} \left( \left\{ \dfrac{\tilde{f}_{z_Q}(z_Q)}{f_0(z_Q)} > \dfrac{1}{2} \dfrac{f_{z_Q}(z_Q)}{f_0(z_Q)}+l/2 \right\} \vee 
		\left\{ \dfrac{\tilde{f}_{z_P}(z_P)}{f_0(z_p)} < \dfrac{1}{2} \dfrac{f_{z_P}(z_P)}{f_0(z_p)}-l/2  \right\} \right) \leq \\
	& \mathbb{P} \left(  \dfrac{\tilde{f}_{z_Q}(z_Q)}{f_0(z_Q)} > \dfrac{1}{2} \dfrac{f_{z_Q}(z_Q)}{f_0(z_Q)}+l/2 \right) + 
		\mathbb{P}\left( \dfrac{\tilde{f}_{z_P}(z_P)}{f_0(z_p)} < \dfrac{1}{2} \dfrac{f_{z_P}(z_P)}{f_0(z_p)}-l/2 \right). 
\end{align*}
We can thus write 
\begin{align}\label{eq17}
	\mathbb{P}\Bigg( \dfrac{\tilde{f}_{z_Q}(z_Q)}{f_0(z_Q)} > & \dfrac{1}{2} \dfrac{f(z_Q)}{f_0(z_Q)}+l/2 \mid Q \text{ is chosen}\Bigg) = \nonumber \\
	&\mathbb{P}\left(\frac{1}{n} \sum_{i =1}^{n} h^{-1}\omega((x-z_i)/h)\mathbb{I}_{ \{ z_i \in \Omega_P\}}> \dfrac{1}{2} f(z_Q)+l_1/2 \right),
\end{align}
where $l_1 = lf_0(z_Q)$. Now consider a bandwidth $h$ that satisfies $h= \smallO(1)$; for instance suppose we use the optimal bandwidth $h \sim n^{-\frac{d}{d+4}}$ for $d$ dimensional kernel density estimation. As discussed in the proof of Lemma~\ref{lemma4thm5},  for every $\epsilon \approx h $ we can find a $\lambda$ such that for the truncated kernel $\bar{\omega}(u) = \omega(u) \mathbb{I}_{\{\|u\| < \lambda\}}$ in Lemma~\ref{lemma4thm5}, 
\begin{equation*}
\mathbb{E}[\bar{\omega}{(x-z_i/h)}\mathbb{I}_{ \{ z_i \in \Omega_P\}}] = \dfrac{1}{2} f(z_Q) +\smallO(h).
\end{equation*}
Now, let $X_i = \bar{\omega}{(x-z_i/h)}\mathbb{I}_{ \{ z_i \in \Omega_P\}}$. Then, by the Hoeffding's inequality \cite{BV11} we can rewrite Equation~\eqref{eq17} as

\begin{align*}
\mathbb{P}\left(\bar{X}-\mathbb{E}[\bar{X}]>l_2/2\right) \leq  \exp\left(-\dfrac{nl_2^2}{2c}\right),
\end{align*}
where $l_2 = l_1 + \smallO(h)$. 
Similarly, we have 
\begin{equation*}
	\mathbb{P}\Bigg( \dfrac{\tilde{f}_{z_P}(z_P)}{f_0(z_P)} <  \dfrac{1}{2} \dfrac{f(z_P)}{f_0(z_P)}-l_2/2 \mid P \text{ is chosen} \Bigg)  \leq  \exp\left(-\dfrac{nl_2^2}{2c}\right).
\end{equation*} 
We thus conclude 
\begin{equation*}
	\mathbb{P}  \left( \dfrac{\tilde{f}_{z_Q}(z_Q)}{f_0(z_Q)} > \dfrac{\tilde{f}_{z_P}(z_P)}{f_0(z_p)} \right)  \leq \frac{1}{2} n(n-1) \exp\left(-\dfrac{nl_2^2}{2c}\right),
\end{equation*}
where $\frac{1}{2}n(n-1)$ is an upper bound for the number of combinations that $P$ and $Q$ can take. Therefore, as $n \to \infty$, the above probability converges to zero and the proof is complete. 
\end{proof}

\section{Numerical Experiments}\label{sec:exp}
\subsection{Simulation Studies}\label{sec:sim}
To evaluate the proposed nested region (NR) procedure, we compare its performance with the SC method of \citet{SC07}, the Fisher's method for combining $p$-values and an oracle procedure based on known $f_1$ and $\pi_1$. We consider two simulation scenarios:  
\begin{enumerate}
\item[] \emph{scenario 1}: for each hypothesis, the $d$-variate test statistics are generated independently;
\item[] \emph{scenario 2}: the $d$-variate test statistics are correlated.
\end{enumerate} 
In both settings, the proportion of null hypotheses is set to $\pi_0 = 0.8$ and FDR is controlled at $q = 0.1$ level. For $n \in \{1000,2000,5000,10000\}$ hypotheses, the mean of non-null hypotheses is set to $\mu = 2d^{-1/2}$. 

In the first simulation scenario, data corresponding to each  hypothesis is generated independently, as $\mathbb{N}(\mu,I_d)$, where $\mu = 0$ for null hypotheses and $\mu = 2 d^{-1/2}$ for non-null hypotheses. 
In other words, in this simulation setting, data corresponding to the $d$ dimensions of each of $n$ hypotheses are independent and the $n$ hypotheses are also independent of each other. 

In the second simulation scenario, data for each hypothesis is generated from $\mathbb{N}(\mu,\Sigma_{d\times d})$, where, as before, $\mu = 0$ for null hypotheses and $\mu = 2d^{-1/2}$ for non-null hypotheses. In each simulation instance, the same covariance matrix $\Sigma_{d\times d}$ is used to generate the $d$-variate test statistics corresponding to the each hypothesis; $\Sigma_{d\times d}$ was generated randomly using the \verb=rcorrmatrix= function in the \verb=clusterGeneration= R-package with parameter \verb=alphad= $=1$. In other words, in this simulation setting, the multivariate data for each hypothesis are dependent, while the data for two different hypotheses are independent of each other. 

To apply NR and SC methods, we use the \verb=pdfCluster= R-package \citep{AMR11} to estimate the density of multivariate test statistics using a kernel of the form 
\begin{equation*}
\hat{f}(y)= \sum_{i=1}^n \left( n \prod_{j=1}^d {h^{opt}_{j}} \right)^{-1} \prod_{j=1}^d K\left(\frac{y_{j} - x_{i,j}}{h^{opt}_{j}}\right).
\end{equation*}
In our simulations, the bandwidth for each dimension $j$ is set to the asymptotically optimal bandwidth suggested in \citep{BA97}. 

Table~\ref{tab:case1} and Figure~\ref{fig:case1} show the estimated FDR and FNR of various methods for simulation scenario 1. The estimated rates are based on averages over over $B = 1000$ simulation replicates. 
Table~\ref{tab:case2} and Figure~\ref{fig:case2} show the same estimates for simulation scenario 2. 

In both scenarios, the SC method fails to control the FDR at the desired level, especially as the dimension $d$ increases. 
In scenario 1, the Fisher's method correctly controls the FDR. This is, of course, expected as the underlying independence assumption of the Fisher's method is satisfied in this case. 
In simulation scenario 2, where the $d$-variate test statistics are no longer independent, the Fisher's method fails to control the FDR at the desired level. In fact, the estimated FDR of the Fisher's method is even worst than SC in lower dimensions.

\begin{table}[H]
\caption{The number of rejected hypotheses at various FDR cutoffs using the Fisher's method, the SC method \citep{SC07} and the proposed NR (nested region) method to the TCGA data. The number of rejected hypotheses using }\label{tab:realdata}
\centering
\begin{tabular}{|c|c|c|c|c|c|}
\hline
\, & \multicolumn{3}{|c|}{FDR Adjustment Methods}  & BH--mRNA & BH--DNA 
\\ \cline{2-4}
$q$ & NR  & Fisher & SC & expression & methylation \\ \hline 
0.05      & 52  & 42 & 98 & 16 & 28\\
\hline
0.1    & 129       & 57 & 167 & 32 & 60 \\
\hline
0.15       & 195     & 86 & 224 & 39 & 101 \\
\hline
\end{tabular}
\end{table}

Unlike the SC and the Fisher's methods, the proposed NR method correctly controls the FDR in both simulation settings and for all values of $n$ and $d$. The FNR of the NR method is slightly higher than the oracle procedure and the gap increases with $d$. This is due to the inefficiency of kernel density estimation in higher dimensions. However, the FNR of the NR method improves as the  number of hypotheses $n$ increases. 

\begin{table}[h]
\centering
\caption{Average FDR and FNR of various multiple comparison adjustment methods in simulation scenario 1; here all $n \in \{1000, 2000, 5000, 10000\}$ (null and non-null) hypotheses are  uncorrelated, $q = 0.1$, $\pi_0 = 0.8$, and the mean of non-null hypothesis is set to $2 d^{-1/2}$.}
  \label{tab:case1}
\begin{adjustbox}{max width=0.85\textwidth}
\begin{tabular}{ccccccccccc}
  \hline
  \, & \, & \multicolumn{2}{c}{Fisher} & \multicolumn{2}{c}{SC} & \multicolumn{2}{c}{NR} & \multicolumn{2}{c}{Oracle} \\ 
  $n$ & $d$ & FDR & FNR & FDR & FNR & FDR & FNR & FDR & FNR \\ 
  \hline
1000 &   2 & 0.09 & 0.17 & 0.14 & 0.13 & 0.08 & 0.15 & 0.10 & 0.13 \\ 
  1000 &   3 & 0.09 & 0.18 & 0.23 & 0.12 & 0.09 & 0.15 & 0.10 & 0.13 \\ 
  1000 &   4 & 0.10 & 0.18 & 0.38 & 0.11 & 0.08 & 0.16 & 0.10 & 0.13 \\ 
  1000 &   5 & 0.10 & 0.18 & 0.53 & 0.10 & 0.08 & 0.16 & 0.10 & 0.13 \\ 
  1000 &   6 & 0.10 & 0.18 & 0.64 & 0.08 & 0.08 & 0.17 & 0.10 & 0.12 \\ 
  1000 &   7 & 0.09 & 0.18 & 0.71 & 0.07 & 0.08 & 0.17 & 0.10 & 0.12 \\ 
  1000 &   8 & 0.10 & 0.18 & 0.76 & 0.05 & 0.08 & 0.18 & 0.10 & 0.12 \\ 
  1000 &   9 & 0.09 & 0.19 & 0.79 & 0.04 & 0.09 & 0.18 & 0.10 & 0.12 \\ 
  1000 &  10 & 0.10 & 0.19 & 0.80 & 0.02 & 0.08 & 0.18 & 0.10 & 0.11 \\ \hline
  2000 &   2 & 0.09 & 0.17 & 0.12 & 0.13 & 0.08 & 0.14 & 0.10 & 0.13 \\ 
  2000 &   3 & 0.09 & 0.18 & 0.20 & 0.12 & 0.08 & 0.15 & 0.10 & 0.13 \\ 
  2000 &   4 & 0.09 & 0.18 & 0.34 & 0.11 & 0.08 & 0.15 & 0.10 & 0.13 \\ 
  2000 &   5 & 0.09 & 0.18 & 0.49 & 0.10 & 0.08 & 0.15 & 0.10 & 0.12 \\ 
  2000 &   6 & 0.10 & 0.18 & 0.61 & 0.08 & 0.08 & 0.15 & 0.10 & 0.12 \\ 
  2000 &   7 & 0.10 & 0.18 & 0.69 & 0.07 & 0.08 & 0.16 & 0.10 & 0.12 \\ 
  2000 &   8 & 0.09 & 0.19 & 0.75 & 0.06 & 0.08 & 0.16 & 0.10 & 0.12 \\ 
  2000 &   9 & 0.10 & 0.19 & 0.78 & 0.04 & 0.08 & 0.16 & 0.10 & 0.12 \\ 
  2000 &  10 & 0.09 & 0.19 & 0.80 & 0.03 & 0.08 & 0.17 & 0.10 & 0.11 \\ \hline
  5000 &   2 & 0.10 & 0.17 & 0.11 & 0.13 & 0.08 & 0.14 & 0.10 & 0.13 \\ 
  5000 &   3 & 0.09 & 0.18 & 0.17 & 0.12 & 0.08 & 0.14 & 0.10 & 0.13 \\ 
  5000 &   4 & 0.09 & 0.18 & 0.29 & 0.12 & 0.08 & 0.14 & 0.10 & 0.13 \\ 
  5000 &   5 & 0.09 & 0.18 & 0.44 & 0.10 & 0.08 & 0.14 & 0.10 & 0.13 \\ 
  5000 &   6 & 0.09 & 0.18 & 0.57 & 0.09 & 0.08 & 0.15 & 0.10 & 0.12 \\ 
  5000 &   7 & 0.09 & 0.18 & 0.67 & 0.07 & 0.08 & 0.15 & 0.10 & 0.12 \\ 
  5000 &   8 & 0.09 & 0.18 & 0.73 & 0.06 & 0.08 & 0.15 & 0.10 & 0.12 \\ 
  5000 &   9 & 0.09 & 0.19 & 0.77 & 0.05 & 0.08 & 0.15 & 0.10 & 0.12 \\ 
  5000 &  10 & 0.09 & 0.19 & 0.79 & 0.03 & 0.08 & 0.16 & 0.10 & 0.11 \\ \hline
  10000 &   2 & 0.10 & 0.17 & 0.11 & 0.13 & 0.08 & 0.14 & 0.10 & 0.13 \\ 
  10000 &   3 & 0.09 & 0.18 & 0.15 & 0.13 & 0.09 & 0.14 & 0.10 & 0.13 \\ 
  10000 &   4 & 0.09 & 0.18 & 0.25 & 0.12 & 0.08 & 0.14 & 0.10 & 0.13 \\ 
  10000 &   5 & 0.09 & 0.18 & 0.40 & 0.11 & 0.08 & 0.14 & 0.10 & 0.12 \\ 
  10000 &   6 & 0.09 & 0.18 & 0.54 & 0.09 & 0.08 & 0.15 & 0.10 & 0.12 \\ 
  10000 &   7 & 0.10 & 0.18 & 0.65 & 0.08 & 0.08 & 0.14 & 0.10 & 0.12 \\ 
  10000 &   8 & 0.09 & 0.19 & 0.71 & 0.06 & 0.08 & 0.15 & 0.10 & 0.12 \\ 
  10000 &   9 & 0.09 & 0.19 & 0.76 & 0.05 & 0.08 & 0.15 & 0.10 & 0.12 \\ 
  10000 &  10 & 0.10 & 0.19 & 0.79 & 0.04 & 0.08 & 0.15 & 0.10 & 0.11 \\ 
\hline
\end{tabular}
 \end{adjustbox}
\end{table}

\begin{figure}[h]
\centering
\includegraphics[height=0.485\textheight]{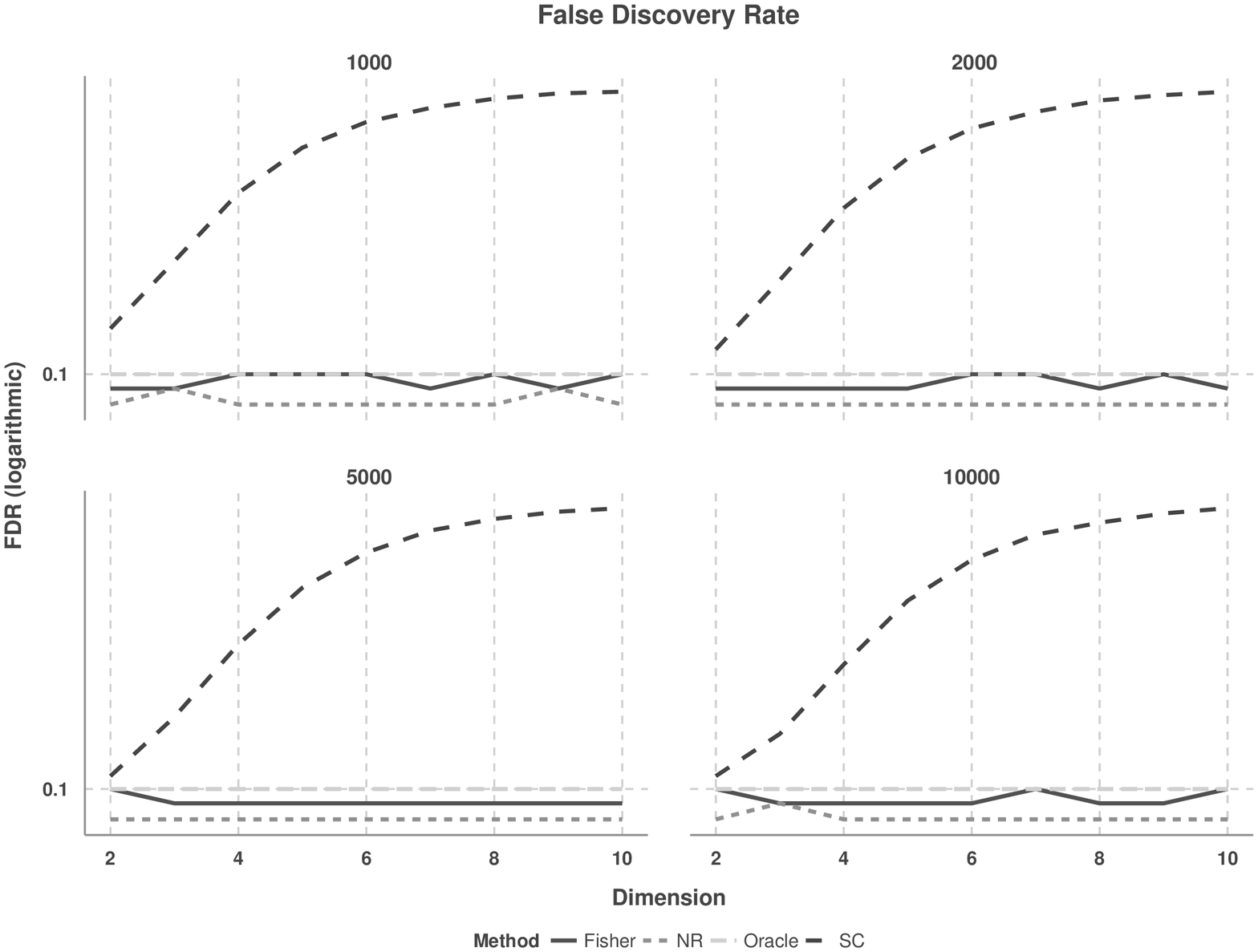} \\
\includegraphics[height=0.485\textheight]{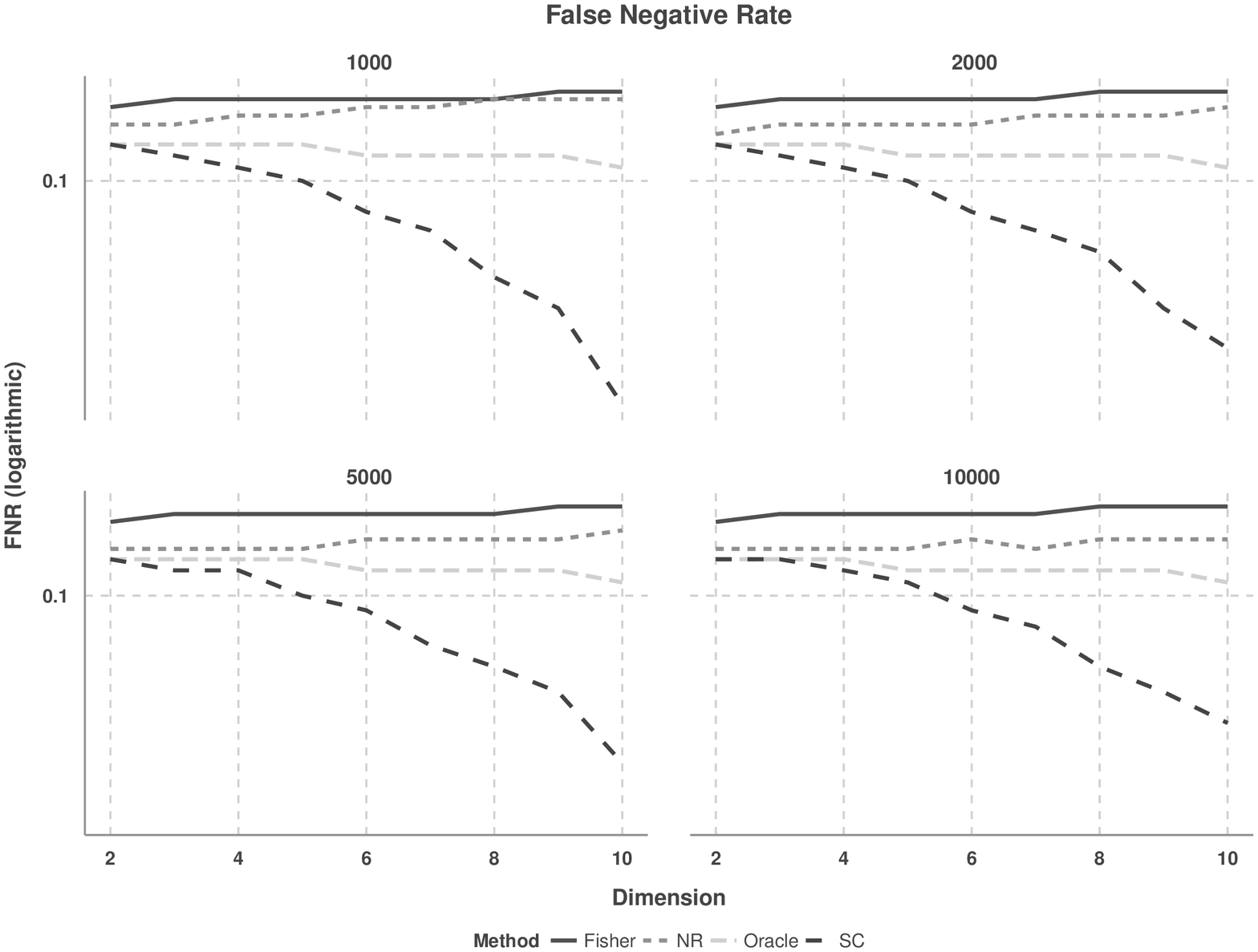}
\caption{Average FDR and FNR of various multiple comparison adjustment methods for the simulation setting of Table~\ref{tab:case1}.}
\label{fig:case1}
\end{figure}

\begin{table}[h]
\centering
\caption{Average FDR and FNR of various multiple comparison adjustment methods in simulation scenario 2; here, $n \in \{1000, 2000, 5000, 10000\}$ $d$-variate test statistics for each hypothesis are generated from a random correlation matrix, $q = 0.1$, $\pi_0 = 0.8$, and the mean of non-null hypothesis is set to $2 d^{-1/2}$.}
\label{tab:case2}
\begin{adjustbox}{max width=0.85\textwidth}
\centering
\begin{tabular}{ccccccccccc}
  \hline
  \, & \, & \multicolumn{2}{c}{Fisher} & \multicolumn{2}{c}{SC} & \multicolumn{2}{c}{NR} & \multicolumn{2}{c}{Oracle} \\ 
  $n$ & $d$ & FDR & FNR & FDR & FNR & FDR & FNR & FDR & FNR \\ 
  \hline
1000 &   2 & 0.33 & 0.13 & 0.13 & 0.13 & 0.08 & 0.16 & 0.10 & 0.13 \\ 
  1000 &   3 & 0.37 & 0.15 & 0.22 & 0.12 & 0.08 & 0.15 & 0.10 & 0.11 \\ 
  1000 &   4 & 0.40 & 0.15 & 0.38 & 0.10 & 0.08 & 0.15 & 0.10 & 0.09 \\ 
  1000 &   5 & 0.42 & 0.16 & 0.53 & 0.09 & 0.08 & 0.15 & 0.10 & 0.07 \\ 
  1000 &   6 & 0.43 & 0.16 & 0.65 & 0.08 & 0.09 & 0.15 & 0.10 & 0.05 \\ 
  1000 &   7 & 0.44 & 0.16 & 0.72 & 0.07 & 0.09 & 0.15 & 0.10 & 0.03 \\ 
  1000 &   8 & 0.45 & 0.17 & 0.77 & 0.07 & 0.08 & 0.15 & 0.10 & 0.02 \\ 
  1000 &   9 & 0.46 & 0.17 & 0.79 & 0.06 & 0.09 & 0.15 & 0.10 & 0.02 \\ 
  1000 &  10 & 0.46 & 0.17 & 0.80 & 0.04 & 0.08 & 0.16 & 0.11 & 0.01 \\ \hline
  2000 &   2 & 0.33 & 0.13 & 0.12 & 0.13 & 0.08 & 0.16 & 0.10 & 0.13 \\ 
  2000 &   3 & 0.37 & 0.15 & 0.19 & 0.12 & 0.08 & 0.15 & 0.10 & 0.11 \\ 
  2000 &   4 & 0.39 & 0.15 & 0.33 & 0.10 & 0.08 & 0.14 & 0.11 & 0.09 \\ 
  2000 &   5 & 0.41 & 0.16 & 0.49 & 0.09 & 0.08 & 0.14 & 0.10 & 0.07 \\ 
  2000 &   6 & 0.43 & 0.16 & 0.62 & 0.08 & 0.08 & 0.13 & 0.10 & 0.05 \\ 
  2000 &   7 & 0.44 & 0.16 & 0.70 & 0.07 & 0.08 & 0.13 & 0.10 & 0.04 \\ 
  2000 &   8 & 0.45 & 0.17 & 0.76 & 0.07 & 0.08 & 0.13 & 0.10 & 0.02 \\ 
  2000 &   9 & 0.46 & 0.17 & 0.79 & 0.06 & 0.08 & 0.13 & 0.10 & 0.02 \\ 
  2000 &  10 & 0.46 & 0.17 & 0.80 & 0.03 & 0.08 & 0.13 & 0.11 & 0.01 \\ \hline 
  5000 &   2 & 0.33 & 0.13 & 0.11 & 0.13 & 0.07 & 0.16 & 0.10 & 0.13 \\ 
  5000 &   3 & 0.37 & 0.15 & 0.16 & 0.12 & 0.07 & 0.15 & 0.10 & 0.11 \\ 
  5000 &   4 & 0.39 & 0.15 & 0.27 & 0.10 & 0.07 & 0.14 & 0.10 & 0.09 \\ 
  5000 &   5 & 0.41 & 0.16 & 0.43 & 0.09 & 0.08 & 0.13 & 0.10 & 0.07 \\ 
  5000 &   6 & 0.43 & 0.16 & 0.57 & 0.08 & 0.07 & 0.13 & 0.10 & 0.05 \\ 
  5000 &   7 & 0.44 & 0.16 & 0.68 & 0.07 & 0.07 & 0.12 & 0.11 & 0.04 \\ 
  5000 &   8 & 0.45 & 0.17 & 0.74 & 0.07 & 0.07 & 0.12 & 0.10 & 0.02 \\ 
  5000 &   9 & 0.46 & 0.17 & 0.78 & 0.06 & 0.07 & 0.13 & 0.10 & 0.01 \\ 
  5000 &  10 & 0.46 & 0.17 & 0.80 & 0.05 & 0.07 & 0.12 & 0.11 & 0.01 \\  \hline
  10000 &   2 & 0.33 & 0.13 & 0.11 & 0.13 & 0.08 & 0.15 & 0.10 & 0.13 \\ 
  10000 &   3 & 0.37 & 0.15 & 0.14 & 0.12 & 0.07 & 0.14 & 0.10 & 0.11 \\ 
  10000 &   4 & 0.40 & 0.15 & 0.24 & 0.10 & 0.07 & 0.14 & 0.10 & 0.09 \\ 
  10000 &   5 & 0.41 & 0.16 & 0.39 & 0.09 & 0.07 & 0.13 & 0.10 & 0.07 \\ 
  10000 &   6 & 0.43 & 0.16 & 0.54 & 0.08 & 0.07 & 0.12 & 0.10 & 0.05 \\ 
  10000 &   7 & 0.44 & 0.16 & 0.65 & 0.07 & 0.07 & 0.12 & 0.10 & 0.04 \\ 
  10000 &   8 & 0.45 & 0.17 & 0.73 & 0.06 & 0.07 & 0.11 & 0.11 & 0.02 \\ 
  10000 &   9 & 0.46 & 0.17 & 0.77 & 0.06 & 0.07 & 0.11 & 0.11 & 0.02 \\ 
  10000 &  10 & 0.46 & 0.17 & 0.79 & 0.05 & 0.06 & 0.11 & 0.11 & 0.01 \\ 
\hline
\end{tabular}
 \end{adjustbox}
\end{table}

\begin{figure}[h]
\centering
\includegraphics[height=0.485\textheight]{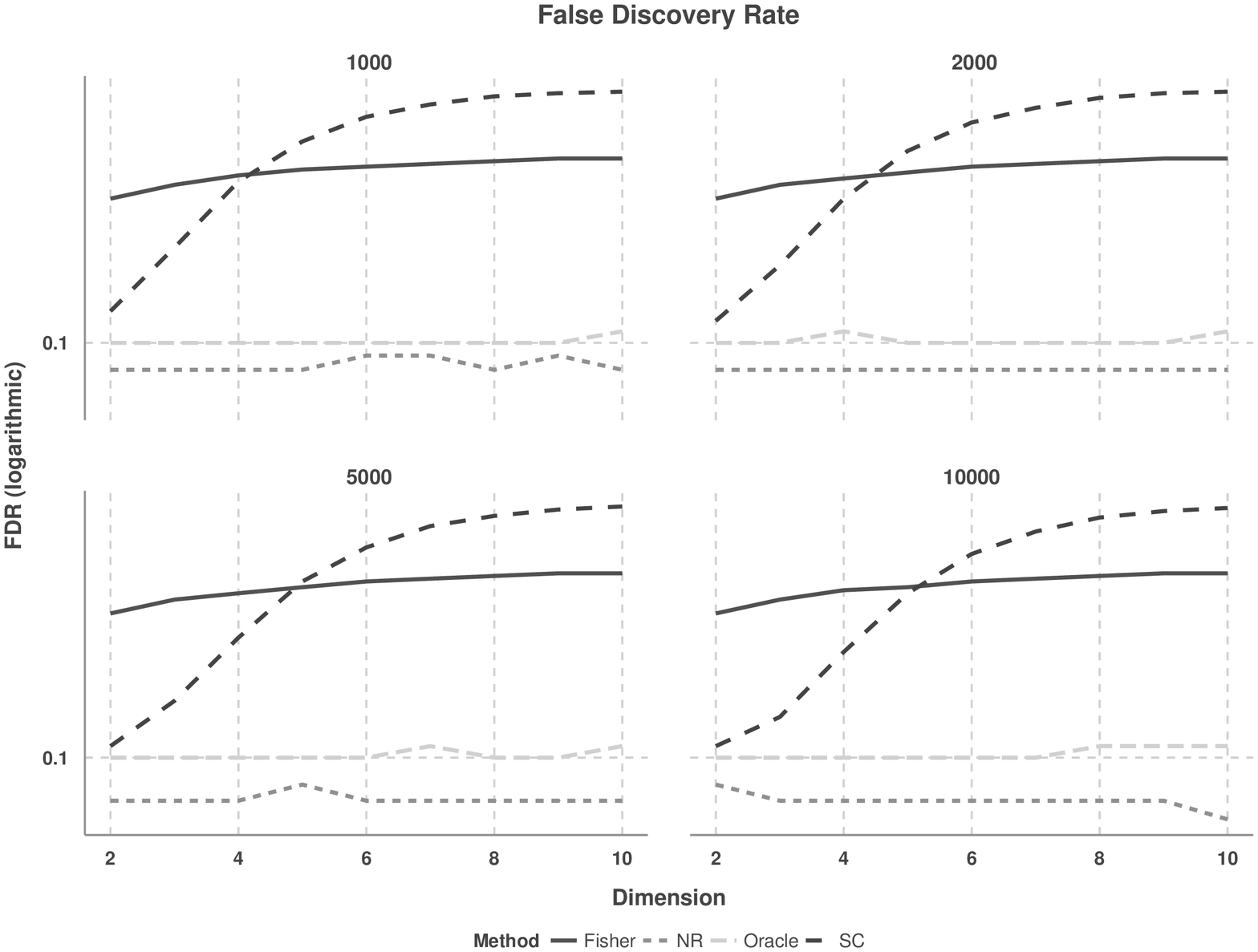} \\
\includegraphics[height=0.485\textheight]{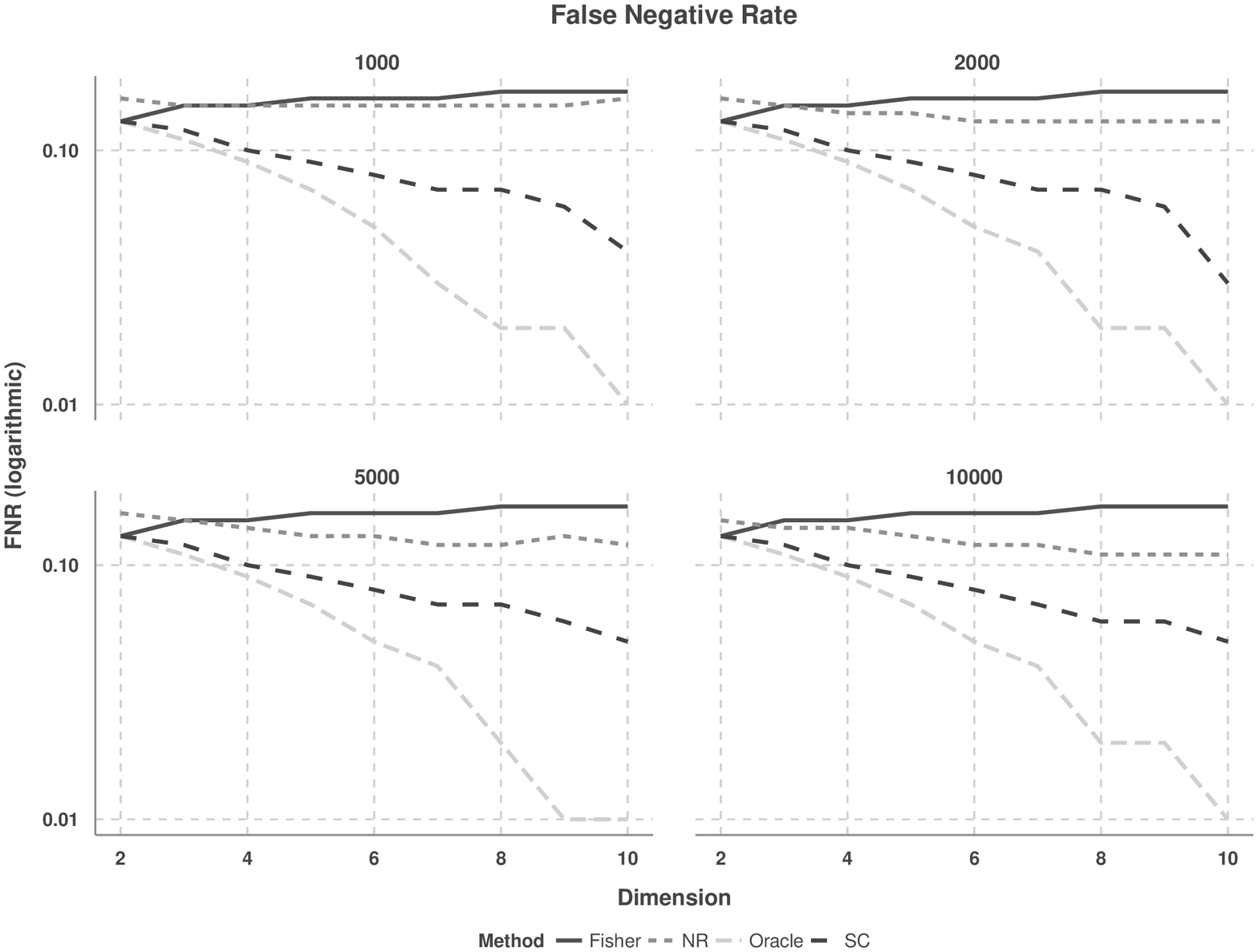}
\caption{Average FDR and FNR of various multiple comparison adjustment methods for the simulation setting of Table~\ref{tab:case2}.}
\label{fig:case2}
\end{figure}

\subsection{Application to TCGA Omics Data}\label{sec:app}
To demonstrate the performance of the proposed nested region (NR) approach to detect non-null multivariate hypotheses, we apply it to \emph{omics} data from The Cancer Genome Atlas (TCGA). 
Specifically, we compare the performance of the NR method with the Fisher's method and the method of \citet{SC07} (SC) to control the false discovery rate when testing for ``differential activity'' of  genes with matched DNA methylation and mRNA expression data from TCGA. 
To this end, ``level 3'' expression and methylation data from 
Colon Adenocarcinoma samples were downloaded from TCGA Data Portal. Matched expression and methylation data for $n=10676$ genes were retained and separate $t$-tests were performed based on each omics data type. DNA methylation and mRNA expression capture two different aspect of epigenetic activity of genes in association with cancer. Combining the evidence from these omics data types can help discover novel associations between genes and cancer status from a data integration perspective. 

Prior to applying our proposed NR method and the competing approaches, we examined the distribution of the test statistics from expression and methylation data and noticed significant deviations from independence. We thus transformed the test statistics to have identity correlation matrix. 

\begin{figure}[t]
  \centering  
  \includegraphics[width=0.75\textwidth, trim=0mm 5mm 5mm 15mm, clip]{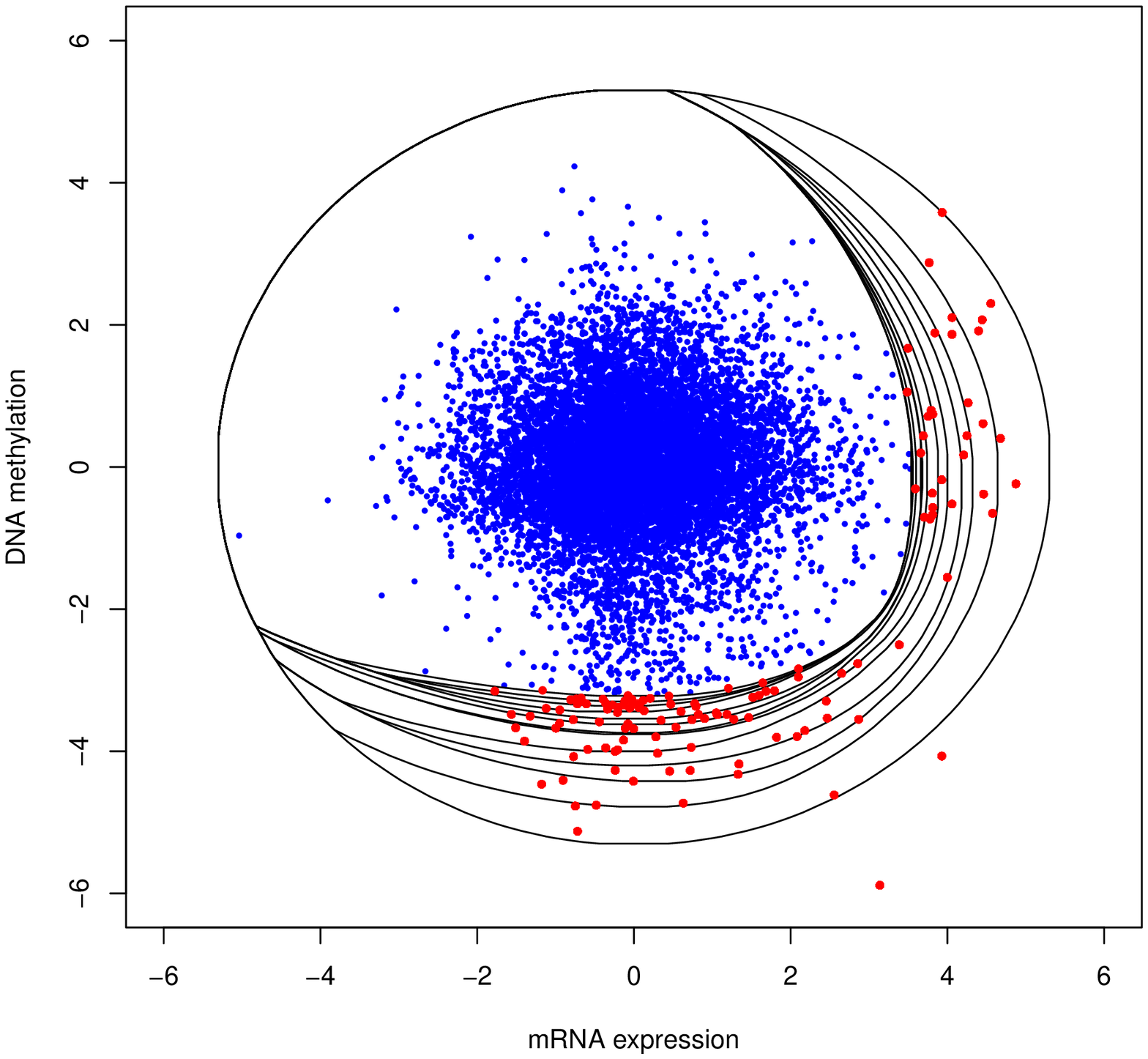}
  \caption{Nested rejection regions for test statistics from the TCGA omics data; small blue circles represent the final null hypotheses, while bigger red points represent non-null hypotheses.}\label{fig:realdata}
\end{figure}

Table~\ref{tab:realdata} shows the number of rejected hypotheses using NR, SC and the Fisher's methods at various FDR cutoffs. 
As pointed out in Section~\ref{sec:sim}, the SC method requires an estimate of the proportion of null hypotheses, $\pi_0$. In Section~\ref{sec:sim}, we used the true value of $\pi_0$ for SC. However, in real data settings, $\pi_0$ is unknown and there are currently no methods for estimating $\pi_0$ in multivariate settings. We thus  estimated $\pi_0$ for SC using the smaller of the two estimated values of $\pi_0$ for the the methylation and expression data. The estimates of $\pi_0$ for each data type were obtained using the  lowest slope line approach of \citet{BH00}; this approach resulted in an estimated $\pi_0$ of 0.9920 for SC.  
Similar to the simulation results in Section~\ref{sec:sim}, SC rejects the largest number of hypotheses, followed closely by NR and then the Fisher's method. However, as we saw in Section~\ref{sec:sim}, there is no guarantee that SC controls the FDR at the desired level. With the transformation applied to the data to obtain an identity correlation matrix, the Fisher's method is expected to control the FDR, as in simulation setting 1 of Section~\ref{sec:sim}; however, our simulation results suggest that the Fisher's method may result in high FNR, which may explain the lower number of rejected hypotheses using the Fisher's method. 

Table~\ref{tab:realdata} also includes the number of hypotheses rejected if only gene expression or DNA methylation evidence is considered (i.e., results from univariate FDR control). The results of the univariate tests suggest that combining the complementary evidence from the two omics sources results in improved power, highlighting the need for methods to control FDR in multivariate hypotheses. 
The nested rejection regions from the proposed NR method for this example are shown in Figure~\ref{fig:realdata}. Here, the form of the rejection regions are close to circles---which is expected given the transformation to identity correlation matrix---and change slightly as the algorithm progresses.

\section{Discussion}\label{sec:disc}
In this paper, a new approach was presented for controlling the false discovery rate (FDR) in multivariate hypothesis testing. 
In todays increasingly data-driven scientific world, multiple evidences are routinely collected for each hypothesis. 
Such multivariate hypotheses offer the opportunity for increased statical power and, hence, new scientific discoveries. Unfortunately, existing approaches are designed for univariate hypotheses and are not guaranteed to control the FDR when testing multivariate hypotheses. On the other hand, combining the multivariate evidence into a univariate summary measure (i.e., a single p-value or univariate test statistic) may result in loss of power. 
The proposed method can thus result in more efficient multivariate hypothesis testing, while controlling the FDR at the desired level. 

The approach proposed in this paper is based on a generalization of the proof of the original \citet{BH95} proposal, which allows for more flexible rejection regions defined based on previously rejected hypotheses. The new proof technique is more broadly applicable and can be used to derive more efficient geometric or algorithmic FDR controlling procedures. 
The nested rejection region (NR) algorithm presented in the paper is an example of such algorithmic approaches, which is shown to be an asymptotically optimal FDR controlling procedure for multivariate hypotheses. 

Similar to the original proposal of \citet{BH95}, our method assumes that the proportion of null hypotheses $\pi_0$ is unknown. It thus controls the FDR conservatively at the level of $\pi_0 q$, instead of the desired FDR level $q$. The estimation of $\pi_0$ for multivariate test statistics is currently an open question and a potentially fruitful research direction. Given such an estimate, more efficient FDR controlling procedures for multivariate hypotheses can be developed. 
Furthermore, our proposal assumes that both null and non-null hypotheses are independent of each other, while allowing for dependence among $d$-variate test statistics corresponding to each hypothesis. 
This assumption is only needed for the development of the optimal algorithm in Section~\ref{sec:approx} and our generalized proof of the BH procedure only requires that the null hypotheses are independent of each other and of non-null hypotheses. 
Extending the theory in Section~\ref{sec:approx} to allow for dependence among multivariate hypotheses and investigating the effect of dependence among hypotheses can be fruitful areas of future research.

\bibliographystyle{plainnat}
\bibliography{FDRrefs}

\end{document}